\newtheorem{theorem}{Theorem}[section]
\newtheorem{lemma}[theorem]{Lemma}
\newtheorem{proposition}[theorem]{Proposition}
\newtheorem{example}[theorem]{Example}
\newtheorem{remark}[theorem]{Remark}
\newtheorem{corollary}[theorem]{Corollary}
\newcommand{\cchi}{\ooalign{%
  $\chi$\cr
  \hidewidth$\,\,\chi$\hidewidth
}%
\,}
\DeclareRobustCommand{\rcchi}{\cchi}
\newcommand{\F}{{\mathbb{F}}}
\newcommand{\1}{\mathbf{1}}
\newcommand{\ord}{{\mathrm{ord}}}
\begin{document}

\title{A Generalized $\chi_n$-Function}

\author{Cheng~Lyu, Mu~Yuan, Dabin~Zheng, Siwei Sun 
	and  Shun Li
        % <-this % stops a space
%\thanks{Dabin Zheng is with Key Laboratory of Intelligent Sensing System and Security (Hubei University), Ministry of Education, Wuhan, 430062, China. E-mail: dzheng@hubu.edu.cn. }
\thanks{Cheng~Lyu, Mu Yuan, and Dabin Zheng are with Hubei Key Laboratory of Applied Mathematics, Faculty of Mathematics and Statistics, Hubei University. Dabin Zheng is also with Key Laboratory of Intelligent Sensing System and Security (Hubei University), Ministry of Education, Wuhan, 430062, China. E-mail: chenglyu@139.com; yuanmu847566@outlook.com; dzheng@hubu.edu.cn. The corresponding author is Dabin Zheng.}
\thanks{Siwei Sun and Shun Li are with School of Cryptology, University of Chinese Academy of Sciences, Beijing 100049, China. E-mail: sunsiwei@ucas.ac.cn; lishun@ucas.ac.cn.}    
\thanks{The research was supported in part by the China National Key Research and Development Program under Grant 2021YFA1000600, and
 in part by the Key Project of the National Natural Science Foundation of China under Grant 62032014, and in part by the Nature Science Foundation of China (NSFC) under  Grants 62272148, 12301671, and in part by the Key Project of the National Cryptography Science Foundation under Grant 2025NCSF01012, and in part by China Postdoctoral Science Foundation 2024M763221.}}

%\thanks{This paper was produced by the IEEE Publication Technology Group. They are in Piscataway, NJ.}% <-this % stops a space
%\thanks{Manuscript received October 26, 2023; revised December 8, 2023.}}

% The paper headers
%\markboth{IEEE IT,~Vol.~, No.~, }%
%{Shell \MakeLowercase{\textit{et al.}}: A Generalized $\chi_n$-Function}

%\IEEEpubid{0000--0000~\copyright~2023 IEEE}
% Remember, if you use this you must call \IEEEpubidadjcol in the second
% column for its text to clear the IEEEpubid mark.

\maketitle

\begin{abstract}
The mapping $\chi_n$ from $\F_{2}^{n}$ to itself defined by $y=\chi_n(x)$ with $y_i=x_i+x_{i+2}(1+x_{i+1})$, where the indices are computed modulo $n$, has been widely studied for its applications in lightweight cryptography. However, $\chi_n $ is bijective on $\F_2^n$ only when $n$ is odd, restricting its use to odd-dimensional vector spaces over $\F_2$. To address this limitation, we introduce and analyze the generalized mapping $\chi_{n, m}$ defined by $y=\chi_{n,m}(x)$ with $y_i=x_i+x_{i+m} (x_{i+m-1}+1)(x_{i+m-2}+1) \cdots (x_{i+1}+1)$, where $m$ is a fixed integer with $m\nmid n$. To investigate such mappings, we further generalize $\chi_{n,m}$ to $\theta_{m, k}$, where $\theta_{m, k}$ is given by $y_i=x_{i+mk} \prod_{\substack{j=1,\,\, m \nmid j}}^{mk-1}  \left(x_{i+j}+1\right), \,\,{\rm for }\,\, i\in \{0,1,\ldots,n-1\}$. We prove that these mappings generate an abelian group isomorphic to the group of units in $\F_2[z]/(z^{\lfloor n/m\rfloor +1})$. This structural insight enables us to construct a broad class of permutations over $\F_2^n$ for any positive integer $n$, along with their inverses. We rigorously analyze algebraic properties of these mappings, including their iterations, fixed points, and cycle structures. Additionally, we provide a comprehensive database of the cryptographic properties for iterates of $\chi_{n,m}$ for small values of $n$ and $m$. Finally, we conduct a comparative security and implementation cost analysis among $\chi_{n,m}$, $\chi_n$,  $\cchi_n$ (EUROCRYPT 2025 \cite{belkheyar2025chi}) and their variants, and prove Conjecture~1 proposed in~\cite{belkheyar2025chi} as a by-product of our study. Our results lead to generalizations of $\chi_n$, providing alternatives to $\chi_n$ and $\rcchi_n$. 
\end{abstract}

\textbf{Keywords:} Permutation,\ S-boxes,\ $\chi_n$.

\section{Introduction}
In order to find a shift-invariant transformation over $\F_2^n$ that is easy to implement and has good cryptographic properties, Joan Daemen in~\cite{daemen1995cipher} introduced a mapping $\chi_n\colon \F_2^n \to \F_2^n$, $ x\mapsto y$ defined by $y_i = x_i +(x_{i+1} +1) x_{i+2}$, where the indices are computed modulo $n$. Due to its compactness and favorable cryptographic properties, $\chi_n$ has been used in several prominent cryptographic algorithms, including \texttt{KECCAK}-$f$~\cite{bertoni2008keccak} (part of the NIST standard SHA-3~\cite{nist2015sha}) and Ascon~\cite{dobraunig2021ascon} (the NIST standard for lightweight cryptography~\cite{nist2023ascon}).

Recent years have witnessed significant progress in the study of the mapping $\chi_n$, with notable advancements in both cryptanalytic and algebraic domains. From a cryptanalytic perspective, Daemen et al. in~\cite{daemen2021comput} and Mella et al. in~\cite{mella2023diff} conducted comprehensive investigations into the differential and linear propagation properties of 
$\chi_n$, respectively. On the algebraic front, Biryukov et al. in~\cite{biryukov2014asasa} pioneered the analysis of the algebraic degree of the inverse mapping $\chi_n^{-1}$ of $\chi_n$, demonstrating its resilience against meet-in-the-middle attacks. This foundational work was further extended by Liu et  al. in \cite{liu2022inverse}, who derived an explicit algebraic expression for $\chi_n^{-1}$. Schoone and Daemen in~\cite{schoone2024state} explored the order and cycle structure of $\chi_n$, while their subsequent research~\cite{schoone2024algebraic} investigated it as a polynomial mapping, leading to the conjecture that  $\chi_n$ does not act as a bijection of a vector space over any extension field of $\F_2$. This conjecture was later rigorously proven by Graner et al. in~\cite{graner2024bijectivity}, who established that the mapping $\chi_n\colon \F_q^n \rightarrow\F_q^n$ is a permutation if and only if $q = 2$ and $ n$ is odd. These contributions collectively deepen the understanding of $\chi_n$ and its cryptographic implications.

Very recently, Kriepke and Kyureghyan in~\cite{kriepke2024algebraic} provided an in-depth understanding of the algebraic properties of $\chi_n$. Specifically, they observed that all iterates of $\chi_n$ can be expressed as linear combinations of the mappings $\gamma_{2k}$ for $k\geq 1$ , where the $i$-th coordinate of the output is given by
	\begin{equation*}
		y_i=x_{i+2k} (x_{i+2k-1}+1) (x_{i+2k-3}+1) \cdots (x_{i+1}+1)
	\end{equation*} 
for $i\in \{0,1,\cdots,n-1\}.$
Based on this observation, they conducted a comprehensive study of the set $\Gamma$ spanned by these $\gamma_{2k}$, with a particular focus on  its subset $G$ composed of invertible elements with respect to composition operation. Finally, by showing that $G$ is isomorphic to an abelian group $(\F_2[z]/ (z^{(n+1)/2}))^*$, they derived concise expressions for iterates of $\chi_n$ and explained their properties.

Note that $\chi_n$ is a bijection on $\F_2^n$ if and only if $n$ is odd. To address this limitation and extend bijective capabilities to even dimensions, we introduce a generalized mapping class, denoted by $\chi_{n,m}$. This mapping  $\chi_{n,m}\colon \F_{2}^{n} \to \F_{2}^{n}$ is defined as  $y=\chi_{n,m}(x)$, where the $i$-th coordinate of $y$ is represented as
\begin{equation} \label{eq-chinm}
	y_i=x_i+x_{i+m} (x_{i+m-1}+1)(x_{i+m-2}+1) \cdots (x_{i+1}+1)
\end{equation}
for $m \ge 2$ and $i \in \{0,1,\ldots,n-1\}$. Crucially, $\chi_{n,m}$ acts as a bijection on $\F_2^n$ if and only if $m \nmid n$,
thereby enabling its application to both odd and even dimension vector spaces over $\F_2$. 
Building on the framework established in~\cite{kriepke2024algebraic}, we generalize  $\gamma_{2k}$ in~\cite{kriepke2024algebraic} to $\theta_{m,k}$. We will demonstrate that these mappings $\theta_{m,k}$ generate a linear space $\Theta_{n,m}$ over $\F_{2}$, and that the subset $G_{n,m}$ forms an abelian group under composition. This construction yields a broad class of bijective mappings of $\F_2^n$, including $\chi_{n,m}$. Leveraging this algebraic structure, we derive precise results on the {cycle structure}, algebraic degree, compositional inverse, and fixed points of the compositional iterates of $\chi_{n,m}$. Notably, $\chi_n$ emerges as a special case of 
$\chi_{n,m}$ when $m=2$, and our findings significantly extend previously known results. Furthermore, while $\chi_n$ is an involution on $\F_{2}^n$ only when $n \leq 3$,  involutory mappings $\chi_{n,m}$ exist even for larger~$n$.

Most recently, a variant permutation of $\chi_n$ on even-dimensional vector spaces $\F_2^{2k}$ has been introduced by Belkheyar et. al. in~\cite{belkheyar2025chi}, named C{\scriptsize HI}C{\scriptsize HI} or $\cchi_{2k}$, defined as follows:
\begin{equation} \label{eq-cchi}
	y_i=\left\{\begin{aligned} 
		&x_i+ \overline{x}_{i+1} x_{i+2}, & & i < k-3 \text{ or } k< i < 2k-2,\\
		&x_k+ \overline{x}_{k-2} x_0, & & i=k-3,\\
		&x_{k-1} + \overline{x}_0 x_1, && i= k-2,\\
		&\overline{x}_{k-3} + \overline{x}_k \, \overline{x}_{k+1}, & & i=k-1,\\
		&x_{k-2}+ \overline{x}_{k+1} x_{k+2},& & i=k,\\
		&x_{2k-2}+ \overline{x}_{2k-1} x_{k-1},& & i=2k-2,\\
		&x_{2k-1}+ \overline{x}_{k-1}x_k,&& i=2k-1,
	\end{aligned}  \right.
\end{equation}
where $\bar{x}_i\colon=x_i+1$ for $i \in \{0,1,\ldots, 2k-1\}$, and $k$ is even. Note that $\cchi_{2k}$ is extended affine (EA) equivalent to the  concatenation of $\chi_{k-1}$ and $\chi_{k+1}$, denoted by $\chi_{k-1} \parallel \chi_{k+1}$. C{\scriptsize HI}C{\scriptsize HI} serves as the core primitive in C{\scriptsize HI}L{\scriptsize OW}, which constitutes a family of tweakable block ciphers along with an associated pseudorandom function (PRF) specifically designed for embedded code encryption.

We conduct a comprehensive comparative analysis of $\chi_{n,m}$, $\chi_n$, $\cchi_n$ regarding cryptographic security and implementation costs for small parameter values. We prove Conjecture~1 in \cite{belkheyar2025chi} which determines the algebraic degree of $\cchi_n^{-1}$ as a by-product. {It shows that  $\chi_{n,m}$ has a higher algebraic degree than $\chi_n$, $\cchi_n$, and the inverse of $\chi_{n,m}$ also has a higher algebraic degree than $\chi_n$, $\cchi_n$}. Nevertheless, implementation cost analysis reveals comparable hardware requirements among 
$\chi_{n}$, $\cchi_n$, and a modified {$\chi_{n,3}$-variant}. Tables~\ref{tb_bs}-\ref{tb_ws} present detailed comparisons of critical security metrics including differential uniformity, nonlinearity, boomerang uniformity, and differential-linear uniformity of $\chi_{n,3}, \chi_n$, and $\cchi_n$ for $n=5, 6, 8$. The data demonstrates a nuanced tradeoff: while our proposed functions exhibit superior performance in certain security indicators (e.g., boomerang/differential-linear resistance), they show slight disadvantages in others (notably differential uniformity and nonlinearity). These findings confirm our construction as a viable alternative to $\chi_n$ and $\cchi_n$ for odd- and even-dimensional vector spaces, respectively. We anticipate that this expanded design space will enhance flexibility in cryptographic algorithm development, particularly for resource-constrained applications.

The remainder of this paper is organized as follows. Section~\ref{sec:prelim} introduces the necessary preliminaries and notations. Section~\ref{sec3} investigates the properties of the mappings $\theta_{m,k}$ and constructs an {abelian group $G_{n,m}$} comprising invertible mappings from $\F_2^n$ to itself. Section~\ref{sec4} applies these theoretical insights to analyze specific mappings within $G_{n,m}$, including $\chi_{n,m}$.  Section~\ref{sec:com} provides a database detailing cryptographic properties of  the compositional iterates of $\chi_{n,m}$ and conducts a comparative security and implementation cost analysis for small values of $n$ and $m$.
Section~\ref{sec:conclusion} concludes this paper.

\section{Notations and preliminaries}\label{sec:prelim}

Let $[n]$ denote the integer set $\{0, 1,\ldots,n-1\}$.
Let $\mathcal{M}_n$ denote the set of all mappings from $\F_2^n$ to itself, and let $S \in \mathcal{M}_n$ be the cyclic left shift operator over $\F_2^n$, defined as $S(x_0,x_1,\ldots,x_{n-1})=(x_1,\ldots,x_{n-1},x_0)$. It is straightforward to verify that {$S$ is an invertible linear mapping}. For $i \ge 0$, let $S^i = \underbrace{S \circ S \circ \cdots \circ S}_{i}$ denote the $i$-th iterate of $S$, with $S^0$ representing the identity mapping on $\F_2^n$. In order to generate a nonlinear mapping
over $\F_2^n$ from the shift operator $S$, recall the Hadamard product $\odot$ over $\F_2^n$ as follows: $x \odot y\colon=(x_0y_0,x_1y_1,\ldots,x_{n-1}y_{n-1})$, where $x=(x_0, x_1,\ldots, x_{n-1}), y=(y_0, y_1, \cdots, y_{n-1})\in \F_2^n$. It is easy to see that the Hadamard product $\odot$ is commutative and distributive with respect to addition, i.e., $x \odot y= y \odot x$ and  $x \odot (y+z) = x \odot y + x \odot z$ for all $x,y,z\in \F_2^n$. In particular,  $x \odot x= x$ and $x\odot ({\bf 1} +x) ={\bf 0}$, where ${\bf 1} =(1, 1, \cdots, 1)$ and ${\bf 0}= (0, 0, \cdots, 0)$
are the all-one and all-zero vectors in $\F_2^n$, respectively.

Furthermore, we formally extend the Hadamard product to the set $\mathcal{M}_n$ as follows:
\begin{align*}
	&\mathcal{M}_n \times \mathcal{M}_n \to \mathcal{M}_n \\
    &(f, g) \mapsto f \odot g, \,\, {\rm where } \,\,  {(f\odot g)(x) = f(x) \odot g(x)}
\end{align*}
 for $x \in \F_2^n$.
More precisely, let the $i$-th coordinate functions of $f$ and $g$ be $f_i$ and $g_i$ respectively, then the $i$-th coordinate functions of $f \odot g$ is represented as
\begin{equation*}
	{(f \odot g)_i(x) = f_i(x) \cdot g_i(x)}
\end{equation*}
for $i \in [n]$.
The set $\mathcal{M}_n$ is equipped with three fundamental operations: addition ($+$), composition 
($\circ$), and Hadamard product ($\odot$) of mappings over $\F_2^n$. For any $f, g, h\in \mathcal{M}_n $, it is straightforward to verify that $f\odot (f +\1)={\bf 0}$. Moreover, these operations satisfy the following algebraic properties:

The composition $\circ$ is right distributive with respect to $+$ and $\odot$, i.e.,
\[ (f + g)\circ h = f \circ h +g\circ h, \,\,\, (f\odot g) \circ h=(f \circ h)\odot(g \circ h) .\]

The Hadamard product $\odot$ is commutative, associative and distributive with respect to addition, i.e.,
\begin{equation*} \label{eq-com&dis}
\begin{aligned}
&f \odot g = g \odot f,\,\,\,  (f \odot g) \odot h = f \odot (g \odot h),\,\,\, f \odot (g+h) = f \odot g + f \odot h.
\end{aligned}
\end{equation*}

{Note that} the left shift operator $S$ is distributive with respect to $+$ and $\odot$, i.e.,
  \begin{equation*}
\begin{aligned}
(f\odot g)\circ S&=(f\circ S)\odot (g \circ S), \,\,\, S\circ (f \odot g)=(S\circ f) \odot (S \circ g), \\
(f+ g)\circ S&=(f \circ S) + (g \circ S), \,\,\, S\circ (f + g)=(S\circ f) + (S \circ g).
\end{aligned}
  \end{equation*}

To analyze the mapping $\chi_{n,m}$ introduced in (\ref{eq-chinm}), we define a key mapping $ \theta_{m,k}\colon \F_2^n \rightarrow \F_2^n$  as follows:
\begin{equation}\label{eq-gamma}
	\theta_{m,k}(x_0,x_1,\ldots,x_{n-1})=(y_0,y_1,\ldots,y_{n-1}),
\end{equation}
where $m$ and $k$ are nonnegative integers with $m\nmid n$, and the $i$-th coordinate of the output is given by
\begin{equation*}
	y_i=x_{i+mk} \prod_{\substack{j=1,\,\, m \nmid j}}^{mk-1}  \left(x_{i+j}+1\right), \,\,{\rm for }\,\, i\in [n].
\end{equation*}
This mapping can be compactly expressed using the shift operator $S$ and the Hadamard product as
\begin{equation*}
	\theta_{m,k}= S^{mk} \bigodot_{\substack{j=1,\,\, m \nmid j}}^{mk-1} (S^{j}+\mathbf{1} ),
\end{equation*}
where $\mathbf{1}=(1,1,\ldots,1) \in \F_2^n$. 
This formulation illustrates the interplay between the shift operator and the Hadamard product in constructing nonlinear mappings over $\F_2^n$.

\section{Algebraic structure of the set associated with $\theta_{m,k}$}\label{sec3}

In this section, we investigate the vector space generated by $\theta_{m,k}$ over $\F_2$ for $k=1, 2, \ldots, \left \lfloor \frac{n}{m} \right \rfloor$, as well as the associated abelian group, where $m$ is a fixed positive integer satisfying 
$m\nmid n$. To this end, we first establish some fundamental properties of $\theta_{m,k}$.

\begin{lemma}\label{lem:0}
If $mk>n$, then $\theta_{m,k}=\mathbf{0}$,  i.e., $\theta_{m,k}(x)=\mathbf{0}$ for any $x \in \F_2^n$.
\end{lemma}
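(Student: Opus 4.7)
The plan is to exploit the elementary identity $f \odot (f + \mathbf{1}) = \mathbf{0}$ recorded in the preliminaries, together with the fact that the shift $S$ has order $n$ on $\F_2^n$ (so $S^a$ and $S^b$ coincide as mappings whenever $a \equiv b \pmod n$). The product defining $\theta_{m,k}$ contains the leading factor $S^{mk}$ and a family of factors $(S^j + \mathbf{1})$ indexed by $j \in \{1,\ldots,mk-1\}$ with $m \nmid j$. If we can exhibit an index $j$ in this set with $j \equiv mk \pmod{n}$, then $S^j = S^{mk}$ as maps, so the product contains the block $S^{mk} \odot (S^{mk} + \mathbf{1}) = \mathbf{0}$, and associativity/commutativity of $\odot$ (together with the fact that $\mathbf{0}$ annihilates any other factor under $\odot$) finishes the proof.

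The only real step is therefore to produce such a $j$ under the hypothesis $mk > n$. The natural candidate is
\[
j = mk - n.
\]
From $mk > n$ we get $1 \le j \le mk - 1$. Since $m \mid mk$ and, by hypothesis on $\chi_{n,m}$, $m \nmid n$, we conclude $m \nmid (mk - n) = j$. Thus $j$ lies in the index set of the Hadamard product, and $j \equiv mk \pmod{n}$ holds by construction.

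I expect no genuine obstacle: the argument is a one-line index computation once one notices that the vanishing identity $x \odot (\mathbf{1} + x) = \mathbf{0}$ can be triggered by periodicity of the shift. The only point that requires attention is making sure the candidate $j$ is simultaneously nonzero, strictly less than $mk$, and not a multiple of $m$; these three conditions are exactly what the hypotheses $mk > n$ and $m \nmid n$ deliver. Everything else (commuting $\odot$-factors, extending the scalar identity $x \odot(x+\mathbf{1}) = \mathbf{0}$ coordinatewise to mappings, and absorbing the remaining factors) follows directly from the algebraic rules for $+$, $\circ$, $\odot$, and $S$ laid out in Section~\ref{sec:prelim}.
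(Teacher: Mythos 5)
Your proposal is correct and is essentially the paper's own argument: the paper writes $S^{mk}=S^{mk-n}$ and observes that, since $m\nmid n$ forces $m\nmid(mk-n)$, the factor $(S^{mk-n}+\mathbf{1})$ occurs in the product, so the vanishing identity $S^{mk-n}\odot(S^{mk-n}+\mathbf{1})=\mathbf{0}$ kills $\theta_{m,k}$ — exactly your choice $j=mk-n$. The index checks you perform ($1\le j\le mk-1$, $m\nmid j$) are the same ones the paper relies on, so there is nothing to add.
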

\begin{proof}
Assume $mk>n$. By the periodicity of the shift operator $S$, we have $S^{mk}=S^{mk-n}$. Since $m \nmid n$, it follows that $m \nmid (mk-n)$. Consequently, the expression for $\theta_{m,k}$ includes a term of the form $(S^{mk-n}+{\bf 1})$. Utilizing the properties of the shift operator $S$ and the Hadamard product, we derive that
\begin{align*}
\theta_{m,k} = S^{mk} \bigodot_{\substack{i=1,\,\, m \nmid i}}^{mk-1} (S^{i}+\mathbf{1}) 
= S^{mk-n}\odot(S^{mk-n}+\1) \bigodot_{\substack{i=1\\
m \nmid i \text{ and } i\neq mk-n}}^{mk-1} (S^{i}+\mathbf{1})
 = {\bf 0}.
\end{align*}
The last equality follows from the fact that $S^{mk-n}\odot(S^{mk-n}+\1)={\bf 0}$.
\end{proof}

Lemma~\ref{lem:0}  immediately leads to the following corollary.

\begin{corollary}
Let $\ell=\left \lfloor \frac{n}{m} \right \rfloor$. Then $k=\ell+1$ is the smallest integer for which $\theta_{m,k}={\bf 0}$.
\end{corollary}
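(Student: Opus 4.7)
The plan is to split the statement into two parts: (i) verifying that $k=\ell+1$ already satisfies the hypothesis of Lemma~\ref{lem:0}, and (ii) showing that no smaller value of $k$ makes $\theta_{m,k}$ vanish, by exhibiting an explicit input on which $\theta_{m,k}$ is nonzero.

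For part (i), I write $n=m\ell+r$ with $0<r<m$ (valid since $m\nmid n$). Then $m(\ell+1)=m\ell+m>m\ell+r=n$, so $mk=m(\ell+1)>n$ and Lemma~\ref{lem:0} immediately gives $\theta_{m,\ell+1}=\mathbf{0}$.

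For part (ii), fix $k\le \ell$, so that $mk\le m\ell<n$. Looking at the $0$-th coordinate
\[
\theta_{m,k}(x)_0 \;=\; x_{mk}\prod_{\substack{1\le j\le mk-1\\ m\nmid j}}\!(x_j+1),
\]
I will construct $x\in\F_2^n$ making this coordinate equal to $1$. The conditions $x_{mk}=1$ and $x_j=0$ for all $j\in\{1,\ldots,mk-1\}$ with $m\nmid j$ involve only indices in the range $\{1,2,\ldots,mk\}$, which is contained in $\{1,\ldots,n-1\}$ because $mk<n$; hence these indices are pairwise distinct modulo $n$ and the conditions are mutually consistent. Setting $x_{mk}=1$ and all other coordinates to $0$ satisfies them, yielding $\theta_{m,k}(x)_0=1$ and thus $\theta_{m,k}\neq\mathbf{0}$.

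The only subtlety is ensuring the prescribed coordinates do not collide modulo $n$; this is where the hypothesis $k\le\ell=\lfloor n/m\rfloor$ (equivalently $mk<n$) is used, and it is also why the argument for (i) requires $m\nmid n$ (to guarantee strict inequality $m(\ell+1)>n$). Combining (i) and (ii) gives the corollary.
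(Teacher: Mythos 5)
Your proof is correct and follows essentially the route the paper intends: the case $k=\ell+1$ is exactly Lemma~\ref{lem:0} (with the observation $m(\ell+1)>n$ since $m\nmid n$), and the non-vanishing for $k\le\ell$ rests on the fact that $mk<n$ keeps all the indices $1,\ldots,mk$ distinct modulo $n$, which is the same point underlying the paper's treatment (it declares the corollary immediate and later uses the distinctness of the variables in Lemma~\ref{lem:thetaindepent}). Your explicit witness vector with $x_{mk}=1$ and all other coordinates $0$ simply makes concrete a detail the paper leaves implicit, and it is a valid verification.
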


It is important to note that for $1\leq i\leq \ell$, the mappings $\theta_{m,i}$ are constructed as the Hadamard product of certain linear mappings, resulting in nonlinear mappings over $\F_2^n$. Recall that the algebraic degree of a mapping from 
$\F^n_2$ to itself is defined as the maximum multivariate degree of its component functions. This property will be explored in the subsequent analysis.

\begin{lemma}\label{lem:thetaindepent}
Let $\ell=\left \lfloor \frac{n}{m} \right \rfloor$. Then $\deg(\theta_{m,k}(x))=(m-1)k+1$ for $1\leq k\leq \ell$.
In particular, the mappings $\theta_{m,0}, \theta_{m,1}, \cdots, \theta_{m,\ell}$ are linearly independent over $\F_2$, and thus
$\Theta_{n,m}\colon=\left\{\sum\limits_{k=0}^{\ell} a_k\theta_{m,k}\colon a_k \in \F_2 \right\}$ forms a vector space over $\F_2$ of dimension $\ell+1$.
\end{lemma}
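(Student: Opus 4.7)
My plan is to analyze the algebraic normal form of the $i$-th coordinate function of $\theta_{m,k}$. First, I would observe that since $m \nmid n$, writing $n = m\ell + r$ with $0 < r < m$ yields $m\ell < n$, hence $mk \le m\ell < n$ for $1 \le k \le \ell$. This means the indices $i+1, i+2, \ldots, i+mk$ are pairwise distinct modulo $n$, so the defining expression
\[ y_i = x_{i+mk} \prod_{\substack{j=1 \\ m \nmid j}}^{mk-1} (x_{i+j}+1) \]
is a product of linear factors in pairwise distinct $\mathbb{F}_2$-variables (and none of these variables equals $x_i$, although this is incidental).

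Next I would count factors: the index set $\{ j \in \{1,\ldots,mk-1\} : m \nmid j\}$ has cardinality $(mk-1)-(k-1) = (m-1)k$, so the product has $(m-1)k$ complemented factors together with the single factor $x_{i+mk}$, for a total of $(m-1)k+1$ linear factors in distinct variables. Expanding in the Boolean algebra (using $x^2 = x$), the top-degree monomial
\[ x_{i+mk} \prod_{\substack{j=1 \\ m \nmid j}}^{mk-1} x_{i+j} \]
appears with coefficient $1$ and has degree $(m-1)k + 1$, and no other expanded monomial can reach this degree since deleting any $+1$ in one of the factors can only decrease the degree. Thus $\deg(\theta_{m,k}) = (m-1)k + 1$, which also trivially covers $k = 0$ (for which $\theta_{m,0}$ is the identity of degree $1$).

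For linear independence, I would exploit that the degree sequence $(m-1)k + 1$ for $k = 0, 1, \ldots, \ell$ is strictly increasing (because $m \ge 2$), so the $\theta_{m,k}$ have pairwise distinct algebraic degrees. Given a hypothetical relation $\sum_{k=0}^{\ell} a_k \theta_{m,k} = \mathbf{0}$ with $a_k \in \mathbb{F}_2$, let $k^\ast$ be the largest index with $a_{k^\ast} = 1$; then the homogeneous degree-$((m-1)k^\ast+1)$ component of the $i$-th coordinate of the sum coincides with that of $\theta_{m,k^\ast}$ alone, and by the preceding paragraph this component contains the nonzero leading monomial. This contradicts the zero relation, forcing all $a_k = 0$ and yielding $\dim \Theta_{n,m} = \ell+1$.

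The only real obstacle is ensuring that the top-degree monomial truly survives in the expanded ANF without cancellation; but this is immediate once the distinctness of the variables $x_{i+j}$ (guaranteed by $mk < n$) is in place, because a product of distinct linear forms $x_{a_1}\prod_t (x_{a_t}+1)$ in distinct Boolean variables always expands to include $\prod_t x_{a_t}$ with coefficient $1$. Everything else is a bookkeeping count.
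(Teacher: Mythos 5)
Your proposal is correct and follows essentially the same route as the paper: the variables $x_{i+j}$ appearing in the product are pairwise distinct because $mk \le m\ell < n$, which gives $\deg(\theta_{m,k}) = (m-1)k+1$, and the strictly increasing degree sequence then forces linear independence. You simply spell out the factor count and the survival of the leading monomial, which the paper leaves implicit.
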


\begin{proof}
For $k \ge 1$, the entries of $\theta_{m,k}(x)$ are defined by
\begin{equation*}
		y_i=x_{i+mk} \prod_{\substack{j=1,\,
m \nmid j}}^{mk-1}  \left(x_{i+j}+1\right).
	\end{equation*}
Since all variables $x_i$ appearing in the product are distinct, it follows that $\deg(\theta_{m,k}(x))=(m-1)k+1$. Consequently,  the mappings $\theta_{m,0}, \theta_{m,1}, \cdots, \theta_{m,\ell}$ are linearly independent over $\F_2$.
\end{proof}

Let $\theta_0=\theta_{m,0}$ denote the identity mapping from $\F_2^n$ to itself. From Lemma~\ref{lem:thetaindepent} we know that $\theta_0, \theta_{m,1}, \cdots, \theta_{m,\ell}$ are linearly independent over $\F_2$. To construct a group comprising invertible mappings on $\F_2^n$, we define a set
\begin{equation}\label{eq:group}
{G_{n,m}}=\left\{\theta_0 +\sum\limits_{k=1}^{\ell} a_k\theta_{m,k} \colon a_k \in \F_2 \right\} \subseteq \Theta_{n,m},
\end{equation}
where $m$ is a fixed number with $m\nmid n$ and $\ell= \lfloor \frac{n}{m}\rfloor$.
Next we will show that the set ${G_{n,m}}$ forms a monoid under composition. Recall that a \emph{monoid} $(\mathcal{G},*)$ is a set $\mathcal{G}$ equipped with an associative binary operation $*$ and a neutral element with respect to $*$.
A monoid in which every element has an inverse element is called a \emph{group}. To establish this result, we require the following key lemma.

\begin{lemma} \label{le-st-m}
Let $m$ and $t$ be integers with $m \mid t$, $f=\sum\limits_{i=0}^{k} a_i\theta_{m,i}$, $g=\theta_0 +\sum\limits_{j=1}^{s} b_j\theta_{m,j}$, where $a_i,b_j \in \F_2$. Then,
\begin{equation} \label{eq-stm}
		(S^t \circ f) \bigodot_{v \in \{1,2,\ldots,m-1\}}  (S^{t-v}\circ g+\1) =S^{t-m}\circ \left(\sum\limits_{i=0}^{k} a_i\theta_{m,(i+1)}\right).
\end{equation}
\end{lemma}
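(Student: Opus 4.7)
The plan is to compare the two sides of (\ref{eq-stm}) coordinate by coordinate in $\F_2^n$. Using the shift formula $(S^a \circ h)(x)_i = h(x)_{i+a}$ together with the coordinate-wise definition $(h_1 \odot h_2)(x)_i = h_1(x)_i \cdot h_2(x)_i$, and writing $j = i + t$, the $i$-th coordinate of the left-hand side collapses to
\[ f(x)_j \cdot \prod_{v=1}^{m-1}\bigl(g(x)_{j-v}+1\bigr). \]

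For the right-hand side I would unravel $\theta_{m,i'+1}(x)_{j-m}$ from the definition in (\ref{eq-gamma}) and split the index set $\{\ell\colon 1 \le \ell \le m(i'+1)-1,\ m \nmid \ell\}$ into the low block $\{1,\ldots,m-1\}$ (via the substitution $v = m - \ell$) and the high block $\{m+1,\ldots,m(i'+1)-1\}$ with $m \nmid \ell$ (via $\ell' = \ell - m$). These substitutions yield the clean factorization
\[ \theta_{m,i'+1}(x)_{j-m} = \theta_{m,i'}(x)_j \cdot \prod_{v=1}^{m-1}(x_{j-v}+1), \]
valid for every $i' \ge 0$ (with $\theta_{m,0}$ being the identity). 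Summing against $a_{i'}$ gives the $i$-th coordinate of the right-hand side as $f(x)_j \cdot \prod_{v=1}^{m-1}(x_{j-v}+1)$, and the lemma reduces to the identity
\[ f(x)_j \cdot \prod_{v=1}^{m-1}(g(x)_{j-v}+1) = f(x)_j \cdot \prod_{v=1}^{m-1}(x_{j-v}+1). \]

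Since $g - \theta_0 = \sum_{j'=1}^{s}b_{j'}\theta_{m,j'}$, the reduced identity follows from the key claim that
\[ f(x)_j \cdot \theta_{m,j'}(x)_{j-v} = 0 \quad \text{for all } v \in \{1,\ldots,m-1\} \text{ and } j' \ge 1. \]
To prove the claim I would assume $\theta_{m,j'}(x)_{j-v} = 1$ and deduce $f(x)_j = 0$. By the definition of $\theta_{m,j'}$, this forces $x_{j-v+\ell} = 0$ for every $\ell \in \{1,\ldots,mj'-1\}$ with $m \nmid \ell$, together with $x_{j-v+mj'} = 1$. Taking $\ell = v$ gives $x_j = 0$; taking $\ell = v + rm$ for $r = 1,\ldots,j'-1$ gives $x_{j+rm} = 0$; the head condition rewrites as $x_{j+(mj'-v)} = 1$. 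Examining each term $\theta_{m,i'}(x)_j$ in the sum defining $f(x)_j$: the case $i'=0$ gives $x_j = 0$; for $1 \le i' < j'$ the head factor $x_{j+mi'}$ vanishes; for $i' \ge j'$ the product contains the factor $(x_{j+(mj'-v)}+1) = 0$, since $1 \le mj'-v \le mi'-1$ and $m \nmid (mj'-v)$. Hence $\theta_{m,i'}(x)_j = 0$ for every $i'$, and therefore $f(x)_j = 0$.

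The key claim in turn implies that whenever $f(x)_j = 1$ one has $g(x)_{j-v} = x_{j-v}$ for every $v$, so the two products in the reduced identity agree; and when $f(x)_j = 0$ both sides vanish. Either way the two sides of (\ref{eq-stm}) match on each coordinate, completing the proof. The main obstacle is the key claim: one must locate the witness index $mj'-v$ inside the product defining each $\theta_{m,i'}$ for $i' \ge j'$ and verify simultaneously that it lies in the admissible range $\{1,\ldots,mi'-1\}$ and is not divisible by $m$. Both checks follow from $1 \le v \le m-1$ and $i' \ge j' \ge 1$, while the standing hypothesis $m \nmid n$ (invoked via Lemma~\ref{lem:0}) ensures the relevant indices do not collide modulo $n$ within the ranges considered.
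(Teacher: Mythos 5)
Your proof is correct and is essentially the paper's argument recast coordinate-wise: your key claim $f(x)_j\cdot\theta_{m,j'}(x)_{j-v}=0$ is the pointwise form of the paper's cross-term annihilation $(S^t\circ\theta_{m,i})\odot(S^{t-v}\circ\theta_{m,j'})=\mathbf{0}$, and your factorization $\theta_{m,i'+1}(x)_{j-m}=\theta_{m,i'}(x)_j\prod_{v=1}^{m-1}(x_{j-v}+1)$ is the coordinate version of the paper's shift-algebra evaluation of $(S^t\circ f)\bigodot_{v}(S^{t-v}+\1)$. A minor remark: the closing appeal to $m\nmid n$ and Lemma~\ref{lem:0} to exclude index collisions modulo $n$ is unnecessary, since if the constraints imposed by $\theta_{m,j'}(x)_{j-v}=1$ were inconsistent then that factor is simply $0$ and the key claim holds trivially.
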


\begin{proof}
Let $g_0=\sum_{j=1}^{s} b_j\theta_{m,j}$. We begin by expanding the left-hand side of (\ref{eq-stm}):
\begin{equation}\label{eq:fg}
\begin{aligned}
 &(S^t \circ f)\bigodot\limits_{v \in \{1,2,\ldots,m-1\}}  \left( S^{t-v}\circ g+\1 \right) \\ \\
=&(S^t \circ f)\bigodot_{v \in \{1,2,\ldots,m-1\}}  \left( S^{t-v}\circ \theta_0+\1+S^{t-v}\circ g_0 \right) \\ \\
=&(S^t \circ f)\Bigg(\bigodot_{v \in \{1,2,\ldots,m-1\}} ( S^{t-v}\circ \theta_0+\1)
+ \sum_{\emptyset\neq V\subseteq \{1,2,\ldots,m-1\}} \, \bigodot_{u \in V} (S^{t-u} \circ g_0) \\
&\quad \bigodot_{w \in \{1,2,\ldots,m-1\}\setminus V} (S^{t-w} \circ \theta_0+\1)\Bigg)\\ \\
=&(S^t \circ f)\bigodot_{v \in \{1,2,\ldots,m-1\}}  \big(S^{t-v}+\1 \big)\\
&\quad+ \sum_{\emptyset\neq V\subseteq \{1,2,\ldots,m-1\}} \, \bigodot_{u \in V}  (S^t \circ f) \odot(S^{t-u}\circ g_0) \bigodot_{w \in \{1,2,\ldots,m-1\} \setminus V} \left(S^{t-w}\circ \theta_0+\1\right).
\end{aligned}
\end{equation}
Next, we will show that for any $v \in \{1,2,\ldots,m-1\}$,
$$ (S^t \circ f)\odot (S^{t-v}\circ  g_0)= \sum_{i=0}^k a_i (S^t \circ \theta_{m,i})\odot \sum_{j=1}^{s} b_j (S^{t-v} \circ \theta_{m,j})= {\bf 0}.$$ 
To see this, observe that:
\[\begin{array}{c}
		S^t \circ \theta_{m,i} = S^t \circ \left( S^{mi} \bigodot\limits_{\substack{ u=1\\
m \nmid u}}^{mi-1} ( S^{u}+{\bf 1})\right) = S^{mi+t} \bigodot\limits_{\substack{u=1\\
m \nmid u}}^{mi-1} (S^{t+u}+{\bf 1}), \notag \\  \\
		S^{t-v} \circ \theta_{m,j} =  S^{mj+t-v} \bigodot\limits_{\substack{u=1\\
m \nmid u}}^{mj-1} \left( S^{t-v+u}+{\bf 1}\right). \notag
	\end{array}\]
Since $v \in \{1,2,\ldots,m-1\}$, we have $mj+t-v \ge t+1$ and $mi+t \ge t-v+1$. If $mj-v < mi$, then $t+1 \le mj+t-v < mi+t$. Since $mj-v \not\equiv 0 \pmod m$, the term $(S^{mj+t-v}+{\bf 1})$ must appear in the expression of $S^t \circ \theta_{m,i}$. Since $S^{mj+t-v}\odot(S^{mj+t-v}+{\bf 1})= {\bf 0}$, we conclude that $(S^t \circ \theta_{m,i}) \odot (S^{t-v}\circ \theta_{m,j})= {\bf 0}.$ Similarly, if $mj-v > mi$, the same result holds. Thus, $ (S^t \circ f)\odot (S^{t-v}\circ  g_0)= {\bf 0}$ for all $v \in \{1,2,\ldots,m-1\}$.  Consequently, {by the properties of the operations introduced} in Section~\ref{sec:prelim}, (\ref{eq:fg}) can be reduced to
$$\begin{aligned}
&(S^t \circ f)\bigodot_{v \in \{1,2,\ldots,m-1\}}\left(S^{t-v}\circ g+\1\right)=(S^t \circ f) \bigodot_{v \in \{1,2,\ldots,m-1\}}\left(S^{t-v}+\1 \right)\\
=&\sum_{i=0}^k a_i \left( S^t \circ \theta_{m,i} \right) \bigodot_{v \in \{1,2,\ldots,m-1\}} \left(S^{t-v}+\1 \right)\\
=&\sum_{i=0}^k a_i S^t \circ \left( S^{mi} \bigodot_{\substack{j=1\\
m \nmid j}}^{mi-1} (S^{j}+\1) \right) \bigodot_{v \in \{1,2,\ldots,m-1\}} \left( S^{t-m}\circ\left(S^{m-v}+\1 \right)\right)\\
=&\sum_{i=0}^k a_i S^{t-m} \circ \left( S^{mi+m} \bigodot_{\substack{j=1\\
		m \nmid j}}^{mi-1} (S^{m+j}+\1) \right) \bigodot_{v \in \{1,2,\ldots,m-1\}} \left( S^{t-m}\circ\left(S^{m-v}+\1 \right)\right)\\
=&\sum_{i=0}^k a_i S^{t-m} \circ\left( S^{mi+m}  \bigodot_{\substack{j=m+1\\
m \nmid j}}^{m(i+1)-1} (S^{j}+\1) \bigodot_{v \in \{1,2,\ldots,m-1\}}  \left(S^{v}+\1 \right)\right)\\
=&\sum_{i=0}^k a_i S^{t-m} \circ\left( S^{mi+m} \bigodot_{\substack{j=1\\
m \nmid j}}^{m(i+1)-1} (S^{j}+\1)\right) \\
=&\sum_{i=0}^k a_i S^{t-m}\circ \theta_{m,(i+1)}.
   \end{aligned}$$
\end{proof}
Based on this lemma, we can show that the set ${G_{n,m}}$ introduced in (\ref{eq:group}) forms a monoid under the composition of mappings from $\F_2^n$ to itself.

\begin{theorem} \label{th-close}
Let $f,g \in {G_{n,m}}$, then $f \circ g \in {G_{n,m}}$. In particular, $(G_{n,m}, \circ)$ is a monoid.
\end{theorem}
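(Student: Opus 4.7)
The plan is to decompose $f \circ g$ using right distributivity of composition over $+$, and then reduce each atomic piece $\theta_{m,i} \circ g$ via Lemma~\ref{le-st-m}. Writing $f = \theta_0 + \sum_{i=1}^{\ell} a_i \theta_{m,i}$, right distributivity gives $f \circ g = g + \sum_{i=1}^{\ell} a_i (\theta_{m,i} \circ g)$. Since $g \in G_{n,m}$ already has $\theta_0$-coefficient $1$, it suffices to show that each $\theta_{m,i} \circ g$ lies in $\Theta_{n,m}$ with zero $\theta_0$-coefficient, which then forces $f \circ g \in G_{n,m}$.

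To handle $\theta_{m,i} \circ g$, first establish the recursion
\[
\theta_{m,i} = (S^m \circ \theta_{m,i-1}) \odot \bigodot_{u=1}^{m-1}(S^u + \1)
\]
for $1 \le i \le \ell$. This follows by pulling out the block of factors $(S^u+\1)$ with $1 \le u \le m-1$ from the defining product, using that $S^m \circ \theta_{m,i-1}$ contributes precisely $S^{mi} \odot \bigodot_{w=m+1,\, m\nmid w}^{mi-1}(S^w+\1)$, and absorbing the duplicated $S^{mi}$ via $S^{mi} \odot S^{mi} = S^{mi}$. Composing this recursion with $g$ on the right, together with right distributivity of $\circ$ over $\odot$ and the identity $\1 \circ g = \1$, yields
\[
h_i := \theta_{m,i} \circ g = (S^m \circ h_{i-1}) \odot \bigodot_{u=1}^{m-1}(S^u \circ g + \1).
\]

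Now induct on $i$. The base case is $h_0 = g \in \Theta_{n,m}$. Assuming $h_{i-1} = \sum_k c_k \theta_{m,k} \in \Theta_{n,m}$, apply Lemma~\ref{le-st-m} with $t = m$ and its ``$f$'' equal to $h_{i-1}$; the hypothesis on the outer ``$g$'' is exactly that it has $\theta_0$-coefficient $1$, which is automatic for $g \in G_{n,m}$. The lemma then collapses the product to
\[
h_i = S^0 \circ \sum_k c_k \theta_{m,k+1} = \sum_k c_k \theta_{m,k+1} \in \Theta_{n,m},
\]
where any term $\theta_{m,j}$ with $j > \ell$ vanishes by Lemma~\ref{lem:0}. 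Unrolling the induction gives $h_i = \theta_{m,i} + \sum_{j \ge 1} b_j \theta_{m,i+j}$, which indeed has no $\theta_0$-component for $i \ge 1$. Substituting back, every term of $f \circ g = g + \sum_i a_i h_i$ beyond $\theta_0$ sits in the span of $\theta_{m,1}, \ldots, \theta_{m,\ell}$, so $f \circ g \in G_{n,m}$.

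The monoid claim is then immediate: composition is associative on $\mathcal{M}_n$, and the identity mapping $\theta_0$ belongs to $G_{n,m}$ by taking all $a_k = 0$. The main obstacle is establishing the recursion for $\theta_{m,i}$ cleanly and aligning the index conventions so that Lemma~\ref{le-st-m} plugs in at each inductive step; once these are handled, the rest is essentially bookkeeping in $\F_2$.
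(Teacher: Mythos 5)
Your proposal is correct and follows essentially the same route as the paper: decompose $f \circ g = g + \sum_i a_i(\theta_{m,i}\circ g)$ by right distributivity and then collapse $\theta_{m,i}\circ g$ to $\sum_j b_j\theta_{m,(i+j)}$ by repeated use of Lemma~\ref{le-st-m}, with terms of index beyond $\ell$ killed by Lemma~\ref{lem:0}. The only difference is bookkeeping: you apply Lemma~\ref{le-st-m} with fixed $t=m$ through the recursion $\theta_{m,i}=(S^m\circ\theta_{m,i-1})\odot\bigodot_{u=1}^{m-1}(S^u+\1)$, whereas the paper peels the expanded product top-down with $t=m(i-k)$; this is the same telescoping argument (and your aside about "absorbing a duplicated $S^{mi}$" is unnecessary but harmless).
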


\begin{proof}
Let $f=\theta_0 +\sum\limits_{i=1}^{\ell} a_i\theta_{m,i}$ and $g=\theta_0 +\sum\limits_{j=1}^{\ell} b_j\theta_{m,j}$ and $\ell=\lfloor \frac{n}{m} \rfloor$. Then we have
\begin{equation}\label{align-close}
\begin{aligned}
f \circ g =& \left(\theta_0 +\sum\limits_{i=1}^{\ell} a_i\theta_{m,i}\right) \circ g  = g + \left(\sum\limits_{i=1}^{\ell} a_i\theta_{m,i}\right) \circ g  \\
=& \theta_0 +\sum\limits_{j=1}^{\ell} b_j\theta_{m,j} +\sum\limits_{i=1}^{\ell} a_i\left(\theta_{m,i} \circ  g\right).
\end{aligned}
\end{equation}
Let $g_k\colon= \sum\limits_{j=0}^{\ell} b_j\theta_{m,(j+k)}$ with $b_0=1$. We have 

\begin{equation}\label{eq:theta}
\begin{aligned}
&(S^{m(i-k)}\circ g_k) \bigodot_{\substack{v=1\\ m \nmid v}}^{m(i-k)-1}(S^v \circ g+\1) \\
&= (S^{m(i-k)}\circ g_k) \bigodot_{\substack{v_1=1\\ m \nmid v_1}}^{m-1}(S^{m(i-k)-v_1}\circ g +\1) \bigodot_{\substack{v_2=1\\ m \nmid v}}^{m(i-k-1)-1}(S^{v_2}\circ g +\1)\\
&= \left(S^{m(i-k-1)} \circ \sum^{\ell}_{j=0} b_j \theta_{m,(j+k+1)}\right) \bigodot_{\substack{v_2=1\\ m \nmid v}}^{m(i-k-1)-1}(S^{v_2}\circ g +\1)\\
&= \left(S^{m(i-k-1)} \circ g_{k+1}\right) \bigodot_{\substack{v=1\\ m \nmid v}}^{m(i-k-1)-1}(S^{v}\circ g +\1),
\end{aligned}
\end{equation}
{where the second equal sign is obtained by Lemma \ref{le-st-m}. By applying \eqref{eq:theta} repeatedly, we have

\begin{equation}\label{eq:theta2}
\begin{aligned}
  \theta_{m,i}\circ g &= \left( S^{mi}\bigodot_{\substack{v=1\\
m \nmid v}}^{mi-1}(S^{v}+\1) \right) \circ g = \left( S^{mi}\circ g \right)\bigodot_{\substack{v=1\\
m \nmid v}}^{mi-1} \left( S^{v}\circ g+\1\right) \\
&= \left(S^{m(i-1)}\circ g_{1}\right)  \bigodot_{\substack{v=1\\m \nmid v}}^{m(i-1)-1} (S^{v}\circ g+\1)\\
&= \left(S^{m(i-2)}\circ g_{2}\right)  \bigodot_{\substack{v=1\\m \nmid v}}^{m(i-2)-1} (S^{v}\circ g+\1)\\
&=\cdots=g_i=\sum\limits_{j=0}^{\ell} b_j\theta_{m,(i+j)}.
\end{aligned}\end{equation}}

Then  (\ref{align-close}) becomes
\begin{align*}
f \circ g =\theta_0 +\sum\limits_{j=1}^{\ell} b_j\theta_{m,j} +\sum\limits_{i=1}^{\ell} a_i\left(\theta_{m,i} \circ  g\right) 
=\theta_0 +\sum\limits_{j=1}^{\ell} b_j\theta_{m,j} +\sum\limits_{i=1}^{\ell} a_i\sum\limits_{j=0}^{\ell} b_j\theta_{m,(i+j)}.
\end{align*}

This shows that $f \circ g\in {G_{n,m}}$. It is easy to see that the neutral element of ${G_{n,m}}$ under the composition is $\theta_0$. Therefore, $({G_{n,m}}, \circ)$ is a monoid.
\end{proof}

In the following, we will prove that $({G_{n,m}}, \circ)$ is an abelian group by demonstrating that $({G_{n,m}}, \circ)$ is isomorphic to an abelian group. Consequently, ${G_{n,m}}$ defines a broad class of permutations of $\F_2^n$.

Let $\ell=\left \lfloor \frac{n}{m} \right \rfloor$. Consider the factor ring $R_\ell=\F_2[z]/(z^{\ell+1})$ and its unit group $R_\ell^*=(\F_2[z]/(z^{\ell+1}))^*$. Define the coset of $f$ as $[f] =f+(z^{\ell+1})$, where $(z^{\ell+1})$ is the ideal generated by $z^{\ell+1}$ in $\F_2[z]$. {Since $\F_2[z]$ is an Euclidean domain, we have $[f] \in R_{\ell}^*$} if and only if $\gcd(f(z),z^{\ell+1})=1$, which is equivalent to the constant term of $f$ being $1$. Moreover, polynomials in $R_\ell$ with a constant term of $1$ constitute exactly half of $R_{\ell}$. Thus, the following lemma is immediate.

\begin{lemma}\label{uniq}
Let $f(z) \in \F_2[z]$, then $[f]$ is a unit of $R_{\ell}$ if and only if the constant term of $f$ is 1. Furthermore, the unit group $R_\ell^*$ has $2^\ell$ elements.
\end{lemma}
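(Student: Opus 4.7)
The plan is to treat this as a standard fact about units in a quotient of a Euclidean domain and convert the informal remarks preceding the statement into a clean two-step argument.

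For the first assertion, I would argue as follows. Every coset in $R_\ell = \F_2[z]/(z^{\ell+1})$ has a canonical representative of degree at most $\ell$ obtained by polynomial division. Thus it suffices to work with such representatives. Since $\F_2[z]$ is a Euclidean domain (hence a PID), $[f]$ is a unit in $R_\ell$ if and only if there exists $h(z) \in \F_2[z]$ with $f(z)h(z) \equiv 1 \pmod{z^{\ell+1}}$, which by Bezout's identity is equivalent to $\gcd(f(z), z^{\ell+1}) = 1$. Because the only irreducible factor of $z^{\ell+1}$ is $z$, this gcd equals $1$ precisely when $z \nmid f(z)$, i.e.\ when the constant term $f(0)$ is nonzero. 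Over $\F_2$ the only nonzero element is $1$, so this is equivalent to $f(0) = 1$.

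For the cardinality claim, I would note that the set of canonical representatives is exactly the set of polynomials in $\F_2[z]$ of degree at most $\ell$, which has size $2^{\ell+1}$. Among these, those with constant term $1$ are parametrized by the free choice of the coefficients of $z^1, z^2, \ldots, z^\ell$, giving $2^\ell$ such polynomials. By the first part, these are exactly the units, so $|R_\ell^*| = 2^\ell$.

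There is really no main obstacle here; the only subtlety is being careful to invoke Bezout in $\F_2[z]$ to get the biconditional between unit status and coprimality with $z^{\ell+1}$, and then observing that coprimality with a prime power reduces to nondivisibility by the underlying prime element $z$. Both steps are elementary, and the counting follows immediately from the bijection between cosets and polynomials of degree at most $\ell$. The entire proof should fit in a short paragraph.
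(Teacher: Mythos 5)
Your proposal is correct and follows essentially the same route as the paper, which derives the unit criterion from $\F_2[z]$ being a Euclidean domain via $\gcd(f(z),z^{\ell+1})=1$ (equivalently, constant term $1$) and then counts that such polynomials make up exactly half of the $2^{\ell+1}$ elements of $R_\ell$. Your write-up just makes the Bezout step and the counting by canonical representatives explicit.
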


To show that ${G_{n,m}}$ is an {abelian group}, we define a mapping $\varphi\colon \Theta_{n,m} \to R_\ell$ as follows:
\begin{equation*}
	\varphi \left(\sum_{i=0}^{\ell}a_i\theta_{m,i}\right) = \sum_{i=0}^{\ell}a_iz^i.
\end{equation*}
It is a straightforward verification that the mapping $\varphi$ is well-defined. The following lemma demonstrates that the restriction of $\varphi$ to $G_{n,m}$ is a monoid homomorphism.

\begin{lemma} \label{le-homo}
Let $f,g \in {G_{n,m}}$. Then $\varphi(f \circ g)=\varphi(f) \cdot \varphi(g)$. Consequently, the restriction of $\varphi$ to $G_{n,m}$ is a monoid homomorphism.
\end{lemma}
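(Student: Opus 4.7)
The plan is to leverage the explicit formula for $f \circ g$ already established inside the proof of Theorem~\ref{th-close} and then read off the image under $\varphi$ directly, matching it against the ordinary polynomial product in $R_\ell$.

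First I would write $f=\theta_0+\sum_{i=1}^{\ell}a_i\theta_{m,i}$ and $g=\theta_0+\sum_{j=1}^{\ell}b_j\theta_{m,j}$, and set $a_0=b_0=1$ so that $f=\sum_{i=0}^{\ell}a_i\theta_{m,i}$ and similarly for $g$. Recalling from Theorem~\ref{th-close} that
\begin{equation*}
f \circ g \;=\; \theta_0+\sum_{j=1}^{\ell} b_j\theta_{m,j} +\sum_{i=1}^{\ell} a_i\sum_{j=0}^{\ell} b_j\theta_{m,(i+j)},
\end{equation*}
the convention $a_0=b_0=1$ lets me collapse this into the single clean double sum $f\circ g=\sum_{i=0}^{\ell}\sum_{j=0}^{\ell} a_i b_j\,\theta_{m,(i+j)}$.

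Next I apply $\varphi$. The key point is that whenever $i+j\geq \ell+1$ we have $m(i+j)\geq m(\ell+1)>n$, so Lemma~\ref{lem:0} gives $\theta_{m,(i+j)}=\mathbf{0}$, and these terms contribute nothing to $f\circ g$. Hence the $\F_2$-linearity of $\varphi$ on $\Theta_{n,m}$ yields
\begin{equation*}
\varphi(f\circ g)\;=\;\sum_{\substack{0\le i,j\le \ell\\ i+j\le \ell}}a_i b_j\, z^{i+j}.
\end{equation*}
On the other hand, by definition $\varphi(f)\cdot\varphi(g)=\bigl(\sum_{i=0}^{\ell}a_iz^i\bigr)\bigl(\sum_{j=0}^{\ell}b_jz^j\bigr)$ computed in $R_\ell=\F_2[z]/(z^{\ell+1})$, which after reducing modulo $z^{\ell+1}$ equals the same truncated convolution $\sum_{i+j\le \ell}a_ib_j z^{i+j}$. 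Thus $\varphi(f\circ g)=\varphi(f)\cdot\varphi(g)$.

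Finally, since $\varphi(\theta_0)=1$ is the multiplicative identity of $R_\ell$, and Theorem~\ref{th-close} guarantees $f\circ g\in G_{n,m}$ whenever $f,g\in G_{n,m}$, the multiplicativity just established makes $\varphi|_{G_{n,m}}\colon (G_{n,m},\circ)\to(R_\ell,\cdot)$ a monoid homomorphism. The only real bookkeeping obstacle is the index truncation at $\ell$, and it is exactly resolved by Lemma~\ref{lem:0}; everything else is a direct consequence of the formula already obtained in Theorem~\ref{th-close}.
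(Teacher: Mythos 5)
Your proof is correct and follows essentially the same route as the paper: both rest on the expansion of $f \circ g$ obtained in Theorem~\ref{th-close} (equivalently \eqref{eq:theta2}) together with the $\F_2$-linearity of $\varphi$, reading off the product in $R_\ell$ as a truncated convolution. The only cosmetic difference is that you make the truncation at degree $\ell$ explicit via Lemma~\ref{lem:0}, whereas the paper absorbs it by computing modulo $z^{\ell+1}$.
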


\begin{proof}
Let $f=\sum\limits_{i=0}^{\ell }a_i\theta_{m,i}$ and $g=\sum\limits_{i=0}^{\ell}b_i\theta_{m,i}$ be elements of ${G_{n,m}}$, where $a_0=b_0=1$. By (\ref{eq:theta2}) we have
\begin{align*}
\varphi(\theta_{m,k} \circ g) =\varphi \left(\sum_{i=0}^{\ell} {b_i\theta_{m,{(i+k)}}}\right) =\sum_{i=0}^{\ell}b_iz^{i+k} 
= z^k \cdot \sum_{i=0}^{\ell} b_i z^i =\varphi(\theta_{m,k}) \cdot \varphi(g).
\end{align*}
{ It is straightforward to show that $\varphi$ is an additive homomorphism, so we have}
\begin{align*}
\varphi(f \circ g) &=\varphi\left(\sum\limits_{i=0}^{\ell}a_i\theta_{m,i} \circ g\right) =\sum\limits_{i=0}^{\ell}a_i \varphi(\theta_{m,i} \circ g) \\
&=\sum\limits_{i=0}^{\ell}a_i \varphi(\theta_{m,i}) \cdot \varphi(g) =\varphi(f) \cdot \varphi(g).
\end{align*}
It is clear that $\theta_0$ is the neutral element of ${G_{n,m}}$ and $\varphi(\theta_0)=1$ is the neutral element of $R_\ell^*$.
\end{proof}

\begin{theorem} \label{th-iso}
The set ${G_{n,m}}$ is an {abelian group} under the mapping composition, and it is isomorphic to $R_\ell^*$.
\end{theorem}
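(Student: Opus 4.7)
The plan is to leverage the monoid homomorphism $\varphi$ from Lemma~\ref{le-homo} and show it restricts to a bijection $G_{n,m} \to R_\ell^*$, from which the abelian group structure on $G_{n,m}$ will be transferred from that of $R_\ell^*$. The strategy is almost entirely a counting and linear-independence argument; no further compositional computation is required beyond what is already collected in the preceding lemmas.

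First, I would verify that $\varphi(G_{n,m}) \subseteq R_\ell^*$. Every element of $G_{n,m}$ has the form $\theta_0 + \sum_{k=1}^{\ell} a_k \theta_{m,k}$, so its image under $\varphi$ is the polynomial $1 + \sum_{k=1}^{\ell} a_k z^k$, which has constant term $1$ and is therefore a unit in $R_\ell$ by Lemma~\ref{uniq}. Next, I would argue that $\varphi|_{G_{n,m}}$ is injective: if $\varphi(f) = \varphi(g)$ for $f,g \in G_{n,m}$, then the corresponding coefficients in the expansions relative to $\theta_0,\theta_{m,1},\dots,\theta_{m,\ell}$ must coincide, which is exactly the linear independence of $\theta_0,\theta_{m,1},\ldots,\theta_{m,\ell}$ over $\F_2$ established in Lemma~\ref{lem:thetaindepent}. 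Hence $\varphi|_{G_{n,m}}$ is an injective monoid homomorphism into $R_\ell^*$.

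Now I would finish by a counting argument. The set $G_{n,m}$ is parametrized by the $\ell$ free coefficients $a_1,\dots,a_\ell \in \F_2$, so $|G_{n,m}| = 2^\ell$. By Lemma~\ref{uniq}, $|R_\ell^*| = 2^\ell$ as well. An injection between finite sets of the same cardinality is a bijection, so $\varphi|_{G_{n,m}}\colon G_{n,m} \to R_\ell^*$ is a bijective monoid homomorphism, i.e., a monoid isomorphism. Since $R_\ell^*$ is an abelian group under multiplication, this transfer makes $G_{n,m}$ an abelian group under composition, and each $f \in G_{n,m}$ has an inverse given by $\varphi^{-1}\big((\varphi(f))^{-1}\big)$.

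There is really no central obstacle to overcome: the decisive computational content (the Hadamard-product vanishing that drives the cross terms to zero) has already been absorbed into Lemma~\ref{le-st-m} and Theorem~\ref{th-close}, and the homomorphism property was handled in Lemma~\ref{le-homo}. The only mild subtlety is making sure that ``monoid isomorphism onto a group implies the source is a group'' is invoked cleanly, but this is immediate once bijectivity and the homomorphism property are in hand, since one can pull back inverses through $\varphi^{-1}$.
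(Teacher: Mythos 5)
Your proposal is correct and follows essentially the same route as the paper: both rely on Lemma~\ref{le-homo} for the homomorphism property and on the cardinality count $|G_{n,m}|=|R_\ell^*|=2^\ell$ to upgrade to a bijection, then transfer the abelian group structure from $R_\ell^*$. The only cosmetic difference is that you establish injectivity (via linear independence of the $\theta_{m,k}$) while the paper establishes surjectivity by exhibiting explicit preimages; with the counting step these are interchangeable.
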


\begin{proof}
By Lemma~\ref{le-homo}, {the restriction of $\varphi$ to $G_{n,m}$} is a monoid homomorphism. For any $f=[1+\sum\limits_{i=1}^{\ell}a_i z^i] \in R_{\ell}^*$, we have $f =\varphi(\theta_0+\sum\limits_{i=1}^{\ell}a_i\theta_{m,i})$, where $\theta_0+\sum\limits_{i=1}^{\ell}a_i\theta_{m,i} \in {G_{n,m}}$. Thus, $\varphi$ is surjective from $G_{n,m}$ to $R_{\ell}^*$. Moreover, since $|{G_{n,m}}|=|R_\ell^*|=2^\ell$, $\varphi $ is also bijective from $G_{n,m}$ to $R_{\ell}^*$. Therefore, ${G_{n,m}}$ is isomorphic to $R_\ell^*$, and since
$R_\ell^*$ is an {abelian group}, ${G_{n,m}}$ is also an {abelian group}. 
\end{proof}

\begin{remark}
Every mapping $f \in {G_{n,m}}$ is a permutation of $\F_2^n$. To determine the compositional inverse of $f$, it suffices to compute the inverse of $\varphi(f)$ in $R_\ell$, which is relatively straightforward.
\end{remark}

\begin{proposition} \label{prop:inverse}
Let $f=\theta_0+\sum\limits_{i=1}^{\ell}a_i\theta_{m,i} \in {G_{n,m}}$. Then its compositional inverse is
 $$ f^{-1}=\theta_0+\sum\limits_{j=1}^{\ell}b_j\theta_{m,j},$$
where $b_1=a_1$, $b_j=a_j+\sum \limits_{\begin{subarray}{c}
				u+v=j \\
				u,v \in \{1,2,\ldots, j-1\}
\end{subarray}}a_ub_v$ for $j \in \{2,3,\ldots,\ell\}$. Furthermore, $f$ is an involution if and only if $a_i=0$ for any $i \in \{1,2,\ldots,\left \lfloor \ell/2 \right \rfloor\}$, i.e., the lowest degree of the non-constant terms of $f$ is at least $\lfloor \ell/2 \rfloor+1$.

\end{proposition}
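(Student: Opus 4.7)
The plan is to route everything through the ring isomorphism $\varphi\colon G_{n,m}\to R_\ell^*$ established in Theorem~\ref{th-iso}, since inversion and squaring in the commutative ring $R_\ell=\F_2[z]/(z^{\ell+1})$ are far easier to handle than in the monoid $(G_{n,m},\circ)$. Write $\varphi(f)=1+\sum_{i=1}^{\ell}a_i z^i$ and look for $\varphi(f^{-1})=1+\sum_{j=1}^{\ell}b_j z^j$ satisfying $\varphi(f)\cdot\varphi(f^{-1})\equiv 1\pmod{z^{\ell+1}}$. Once the coefficients $b_j$ are pinned down, the isomorphism transports the identity back to $f^{-1}=\theta_0+\sum_{j=1}^{\ell}b_j\theta_{m,j}$, which is what is claimed.

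For the recursion, I would simply expand $\varphi(f)\varphi(f^{-1})$ and read off the coefficient of $z^j$ for $1\le j\le \ell$. The coefficient of $z^1$ gives $a_1+b_1=0$, hence $b_1=a_1$ in $\F_2$. For $j\ge 2$, collecting all pairs $(u,v)$ with $u+v=j$ (including the endpoints $u=0$ and $v=0$) produces
\[
b_j+a_j+\sum_{\substack{u+v=j\\u,v\in\{1,\ldots,j-1\}}}a_u b_v=0,
\]
which rearranges in characteristic $2$ to the stated formula. Since $\gcd(\varphi(f),z^{\ell+1})=1$ by Lemma~\ref{uniq}, the $b_j$ exist and are determined uniquely by this forward recursion.

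For the involution characterization, note that $f\circ f=\theta_0$ if and only if $\varphi(f)^2=1$ in $R_\ell$. Here the Frobenius in characteristic $2$ is the crucial shortcut: if $g(z)=\sum_{i=1}^{\ell}a_i z^i$, then $\varphi(f)^2=(1+g(z))^2=1+g(z)^2=1+\sum_{i=1}^{\ell}a_i z^{2i}$. Hence $\varphi(f)^2\equiv 1\pmod{z^{\ell+1}}$ exactly when $a_i z^{2i}\equiv 0\pmod{z^{\ell+1}}$ for every $i$, i.e.\ whenever $a_i\ne 0$ one has $2i\ge \ell+1$, which is equivalent to $i\ge\lfloor\ell/2\rfloor+1$. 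This is precisely the stated condition that $a_i=0$ for $i\in\{1,\ldots,\lfloor\ell/2\rfloor\}$.

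I do not expect a real obstacle in this proof: the ring isomorphism of Theorem~\ref{th-iso} does all of the heavy lifting, reducing both parts to elementary manipulations in $\F_2[z]/(z^{\ell+1})$. The only small point requiring care is the index arithmetic in the involution step, namely checking that the threshold $2i\ge\ell+1$ translates cleanly to $i>\lfloor\ell/2\rfloor$ for both parities of $\ell$; this is immediate since $\lceil(\ell+1)/2\rceil=\lfloor\ell/2\rfloor+1$ in either case.
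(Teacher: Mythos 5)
Your proposal is correct and follows essentially the same route as the paper: transfer the problem to $R_\ell^*=(\F_2[z]/(z^{\ell+1}))^*$ via the isomorphism $\varphi$ of Theorem~\ref{th-iso}, read the recursion for the $b_j$ off the coefficients of $\varphi(f)\varphi(f^{-1})=1$, and characterize involutions by $\varphi(f)^2=1$. Your use of the Frobenius identity $(1+g)^2=1+g^2$ is just a slightly slicker phrasing of the paper's observation that the cross terms in the convolution sum cancel in characteristic $2$ (and it handles $\ell=1$ uniformly rather than as a separate case).
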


\begin{proof}
By Theorem \ref{th-iso}, it suffices to find the multiplicative inverse of $\varphi(f)=\left[1+\sum\limits_{i=1}^{\ell}a_iz^{i}\right]$. Assume $\varphi(f^{-1})=\left[1+\sum\limits_{j=1}^{\ell}b_jz^{j}\right]$. Then, 
\begin{equation}
	\begin{aligned}
		1 &=\varphi(f) \cdot \varphi(f^{-1}) =\left[ (1+\sum\limits_{i=1}^{\ell}a_iz^{i})(1+\sum\limits_{j=1}^{\ell}b_jz^{j}) \right] \\
		&\equiv \left[1+\sum\limits_{i=1}^{\ell}a_iz^{i}+\sum\limits_{j=1}^{\ell}b_jz^{j}+\sum_{t=2}^{\ell}c_tz^{t}\right],     \label{align-inverse}
	\end{aligned}
\end{equation}

where
\begin{equation*}
		c_t=
			\sum_{\substack{
					u+v=t \\
					u,v \in \{1,2,\ldots,\ell-1\}}
			}
			a_ub_v.
\end{equation*}
Let $c_1=0$. Then the above congruence holds if and only if $a_t+b_t+c_t=0$ for all $t \in \{1,2,\ldots,\ell\}$. Consequently,  we have	
\begin{equation*}
\left\{\begin{array}{lcl}
b_1 =a_1, \\	
 b_t=a_t+\sum\limits_{\substack{
				u+v=t \\
				u,v \in \{1,2,\ldots,t-1\}}}
	a_ub_v, & {\rm if} \,\,\, 2 \le t \le \ell.
\end{array}\right.
\end{equation*}
When $\ell=1$, since $a_1=b_1$, it follows that $f$ is an involution. For $\ell \ge 2$, $f$ is an involution if and only if
\begin{equation*} \label{eq-ch}
    	0=\begin{array}{ll}
    		\sum\limits_{\begin{subarray}{c}
    				u+v=h \\
    				u,v \in \{1,2,\ldots,h-1\}
    		\end{subarray}}
    		a_u a_v
    	\end{array}
            =\left\{\begin{array}{rl}
            		a_{h/2}, & {\rm if} \,\,\,  2 \mid h \\
            		0, & {\rm if} \,\,\, 2 \nmid h,
            	\end{array}\right.
\end{equation*}
for all $h \in \{1,2,\ldots, \ell\}$. This condition holds if and only if $a_i=0$ for all $i \in \{1,2,\ldots,\left \lfloor \ell/2 \right \rfloor\}$.
\end{proof}

\section{The properties of the mappings $\chi_{n,m}$ and its iterates} \label{sec4}
In this section, we delve into the mapping $\chi_{n,m}$ defined by~\eqref{eq-chinm} and its iterates. Specifically, we give an explicit expression for the inverse of $\chi_{n,m}$ and characterize the algebraic structure of its iterates, 
This characterization offers deeper insights into the fixed points and cycle structure of the mapping $\chi_{n,m}$.

\begin{lemma}\label{lem:order}
Let $\ell =\lfloor\frac{n}{m}\rfloor$ and $f=\left[1+z^j+\sum\limits_{i=j+1}^{\ell}a_iz^i\right] \in {R_\ell}^*$, where $z^j$ is the non-constant term with the lowest degree in $f$. Then, the order of $f$ in $R_\ell^*$ is $2^{\lceil \log_2\frac{\ell+1}{j}\rceil}$, i.e., $\ord(f) =2^{\lceil \log_2\frac{\ell+1}{j}\rceil}$.
\end{lemma}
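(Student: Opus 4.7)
The plan is to exploit the Frobenius structure in characteristic $2$. Writing $f = 1 + h(z)$ with $h(z) = z^j + \sum_{i=j+1}^{\ell} a_i z^i$, the key observation is that since $\F_2[z]$ has characteristic $2$, the Frobenius map $x \mapsto x^2$ is additive. Hence for every nonnegative integer $k$,
\[
f^{2^k} = (1+h(z))^{2^k} = 1 + h(z)^{2^k} \pmod{z^{\ell+1}}.
\]
In particular, $h(z)^{2^k} = z^{j\cdot 2^k} + \sum_{i>j} a_i z^{i\cdot 2^k}$, whose lowest-degree nonzero monomial is exactly $z^{j\cdot 2^k}$.

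Next I would observe that $f^{2^k} = 1$ in $R_\ell^*$ if and only if $h(z)^{2^k} \equiv 0 \pmod{z^{\ell+1}}$, which by the preceding identification of the lowest-degree term is equivalent to $j \cdot 2^k \geq \ell+1$. The smallest nonnegative integer $k$ satisfying this inequality is $k^* = \lceil \log_2 \tfrac{\ell+1}{j} \rceil$. So $f^{2^{k^*}} = 1$ while $f^{2^{k^*-1}} \neq 1$ (when $k^* \ge 1$).

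Finally, to upgrade this to the precise order statement, I would invoke Lemma~\ref{uniq} which gives $|R_\ell^*| = 2^\ell$. Since $R_\ell^*$ is a finite abelian $2$-group, Lagrange's theorem forces the order of every element to be a power of $2$. Combined with the previous paragraph, the order of $f$ is the least power of $2$ annihilating $f$, namely $2^{k^*} = 2^{\lceil \log_2((\ell+1)/j)\rceil}$, as claimed.

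There is no real obstacle here; the argument is essentially a one-line consequence of the Frobenius identity once one notices it. The only small care required is (i) identifying the lowest-degree monomial of $h^{2^k}$ correctly (which uses that $a_j = 1$ and that the Frobenius collapses cross terms in characteristic $2$), and (ii) justifying that the order of $f$ must be a power of $2$, so that checking powers of the form $2^k$ suffices. Both points are immediate from the results already established.
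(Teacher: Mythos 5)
Your proposal is correct and follows essentially the same route as the paper's proof: both use the Frobenius identity $f^{2^k}=1+z^{j2^k}+\sum_i a_i z^{i2^k}$ in characteristic $2$, reduce $f^{2^k}=1$ to the inequality $j\cdot 2^k\ge \ell+1$, and invoke $|R_\ell^*|=2^\ell$ with Lagrange's theorem to restrict to powers of $2$. No differences worth noting.
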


\begin{proof}
Since $|R_{\ell}^*|=2^\ell$, the order of $f$ is a power of 2, denoted by $2^r$, by Lagrange's Theorem. We then have
\begin{equation}
\left\{\begin{aligned}\label{eq:orderf}
 1 &=f^{2^r}= \left[ 1+z^j+\sum\limits_{i=j+1}^{\ell}a_iz^i\right]^{2^r}=\left[ 1+z^{2^rj}+\sum\limits_{i=j+1}^{\ell}a_iz^{2^ri}\right],\\
 1 &\neq f^{2^{r-1}}=\left[ 1+z^{2^{r-1} j}+\sum\limits_{i=j+1}^{\ell}a_iz^{2^{r-1}i}\right].
\end{aligned}\right.
\end{equation}
Note that (\ref{eq:orderf}) holds if and only if $2^rj \ge \ell+1 >2^{r-1}j$. Therefore, we have $2^{r-1} <\frac{\ell+1}{j} \le 2^r$, which imples that $\ord(f) =2^{\lceil \log_2\frac{\ell+1}{j}\rceil}$.
\end{proof}

Due to {their simple expression}, the binomial cases are of particular interest. Using Proposition~\ref{prop:inverse}, we derive the explicit expression for the inverse of the binomial $\theta_0+\theta_{m,k} \in {G_{n,m}}$.

\begin{theorem}\label{thm:bin}
Let $k \ge1$, $\ell=\left \lfloor \frac{n}{m} \right \rfloor $ and $s=\left \lfloor \frac{\ell}{k} \right \rfloor$. Let $f=\theta_0 +\theta_{m,k} \in {G_{n,m}}$. Then $\ord(f) =2^{\lceil \log_2(\frac{\ell+1}{k})\rceil}$, and the compositional inverse of $f$ is
	\begin{equation*}
		f^{-1} =\theta_0 +\theta_{m,k} +\theta_{m,2k} + \cdots +\theta_{ m, sk}.
	\end{equation*}
Moreover, $\deg(f)=(m-1)k+1$ and $\deg(f^{-1})=(m-1)sk+1$.
\end{theorem}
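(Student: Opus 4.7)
The plan is to leverage the monoid isomorphism $\varphi\colon G_{n,m}\to R_\ell^*$ established in Theorem~\ref{th-iso}, which converts every compositional question about $f=\theta_0+\theta_{m,k}$ into a multiplicative question about $[1+z^k]\in R_\ell^*$. Under this translation, the order claim is immediate from Lemma~\ref{lem:order}: applying it with $j=k$ to $\varphi(f)=[1+z^k]$ (no higher-order coefficients present) gives $\ord(f)=\ord(\varphi(f))=2^{\lceil\log_2((\ell+1)/k)\rceil}$.

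For the inverse, I would compute $[1+z^k]^{-1}$ in $R_\ell=\F_2[z]/(z^{\ell+1})$ by the finite geometric series trick. Multiplying out in $\F_2[z]$ gives
\begin{equation*}
(1+z^k)\bigl(1+z^k+z^{2k}+\cdots+z^{sk}\bigr)=1+z^{(s+1)k},
\end{equation*}
and since $s=\lfloor\ell/k\rfloor$ we have $(s+1)k\ge \ell+1$, so the right-hand side is congruent to $1$ modulo $z^{\ell+1}$. Hence
\begin{equation*}
\varphi(f)^{-1}=\bigl[1+z^k+z^{2k}+\cdots+z^{sk}\bigr],
\end{equation*}
and pulling back through $\varphi^{-1}$ yields $f^{-1}=\theta_0+\theta_{m,k}+\theta_{m,2k}+\cdots+\theta_{m,sk}$, as claimed.

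For the algebraic degrees, I would invoke Lemma~\ref{lem:thetaindepent}, which gives $\deg(\theta_{m,j})=(m-1)j+1$ for $1\le j\le \ell$. Because these degrees are strictly increasing in $j$ and the mappings $\theta_{m,j}$ are $\F_2$-linearly independent with pairwise distinct top-degree monomials appearing in the sum, the degree of any $\F_2$-sum of them equals the maximum of the individual degrees. Consequently, $\deg(f)=\deg(\theta_{m,k})=(m-1)k+1$ and $\deg(f^{-1})=\deg(\theta_{m,sk})=(m-1)sk+1$.

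I do not anticipate a serious obstacle: the main subtlety is verifying that the degree of a sum of $\theta_{m,j}$'s is indeed the maximum over the summands rather than smaller due to cancellation of top-degree monomials, but this follows because each $\theta_{m,j}$ contributes monomials of a distinct ANF degree $(m-1)j+1$, and those top monomials cannot cancel against contributions from any other $\theta_{m,j'}$ with $j'\ne j$. Everything else is direct algebra in $R_\ell$ combined with the isomorphism.
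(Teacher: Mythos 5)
Your proposal is correct. The order and degree parts coincide with the paper's argument: the paper likewise obtains $\ord(f)$ from Lemma~\ref{lem:order} applied to $\varphi(f)=[1+z^k]$ and the degrees from Lemma~\ref{lem:thetaindepent} (your extra remark that no top-degree cancellation can occur, because the summands $\theta_0,\theta_{m,k},\ldots,\theta_{m,sk}$ have pairwise distinct algebraic degrees, is exactly the point left implicit there). The only place you diverge is the inverse: the paper feeds the coefficients $a_k=1$, $a_i=0$ ($i\neq k$) into the recursion of Proposition~\ref{prop:inverse} and shows by reducing indices modulo $k$ that $b_t=1$ exactly when $k\mid t$ and $t\le\ell$, whereas you verify the candidate directly in $R_\ell$ via the telescoping identity $(1+z^k)(1+z^k+z^{2k}+\cdots+z^{sk})=1+z^{(s+1)k}$ together with $(s+1)k\ge\ell+1$, and then pull back through the isomorphism of Theorem~\ref{th-iso}. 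Both routes ultimately rest on the same isomorphism $\varphi$, but yours is self-contained and arguably cleaner for this binomial case, since it produces the inverse in closed form without invoking the general recursion; the paper's route has the advantage of illustrating how Proposition~\ref{prop:inverse} is used in practice. One small point to keep in mind: the argument tacitly uses $1\le k\le\ell$ (so that $z^k$ is genuinely the lowest-degree non-constant term in $R_\ell$ and $sk\le\ell$), which is the regime in which $f=\theta_0+\theta_{m,k}$ is a nontrivial element of $G_{n,m}$.
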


\begin{proof}
Let $a_k=1$ and $a_i =0$ for $i \neq k$. By Proposition~\ref{prop:inverse}, we have $b_t=0$ for $t <k$ and $b_k=a_k=1$. In the following, we assume $\ell \geq t >k$. Let $t'<k$ with $t\equiv t' \pmod{k}$. If $k \nmid t$, then
$b_t=a_t+a_kb_{t-k}=b_{t-k}=b_{t'}=0$.  If $k \mid t$, then $b_t=a_t+a_kb_{t-k}=b_{t-k}=b_k=1$. Therefore, $b_t=1$ if and only if $t$ is a multiple of $k$ and does not exceed $\ell$. The $\deg(f)$ and $\deg(f^{-1})$ follow from Lemma~\ref{lem:thetaindepent}, the order of $f$ is obtained by Lemma~\ref{lem:order}.
\end{proof}

\begin{remark}
By examining the state diagram of $\chi_n$, Schoone and Daemen in~\cite{schoone2024state} proved that $\ord(\chi_n)=2^{\lceil {\rm log}_2 \frac{n+1}{2} \rceil}$ when $n$ is odd. This is a special case of Theorem~\ref{thm:bin} for 
$m=2$, $k=1$ and $n$ being odd. So,  Theorem~\ref{thm:bin} generalizes Corollary~9 of \cite{schoone2024state}.
\end{remark}

Let $k=1$ in Theorem~\ref{thm:bin}, and we have $\chi_{n,m}=\theta_0+\theta_{m,1}$. As previously stated, $\chi_{n,m}$ permutes $\F_2^{n}$ if $m \nmid n$. However, if $m \mid n$, then we have
\begin{equation*}
\chi_{n,m}(1,\mathbf{0}_{m-1}, 1,\mathbf{0}_{m-1}, \cdots, 1,\mathbf{0}_{m-1}) ={\bf 0} =\chi_{n,m}({\bf 0}),
\end{equation*}
where $\mathbf{0}_i$ denotes the zero vector with $i$ bits. Thus, $\chi_{n,m}$ is not a permutation.
This leads us to the necessary and sufficient condition for $\chi_{n,m}$ being a permutation, along with its compositional inverse.

\begin{corollary} \label{cor-chinm}
Follow the notation defined above. The mapping $\chi_{n,m}$ in (\ref{eq-chinm}) is a permutation of $\F_2^n$ if and only if $m \nmid n$. In particular, if $m \nmid n$, then the inverse of $\chi_{n,m}$ is
\begin{equation*}
		\chi_{n,m}^{-1} =\theta_0 +\theta_{m,1} +\theta_{m,2} + \cdots +\theta_{ m,\ell},
\end{equation*}
or equivalently,
\begin{equation} \label{eq-inverse}
		y_i=x_i +\sum_{k=1}^{\ell}x_{i+mk} \prod_{\substack{j=1\\ m \nmid j}}^{mk-1} \left(x_{i+j}+1\right)
\end{equation}
for the $i$-th coordinate of the output, where $\ell=\left \lfloor \frac{n}{m} \right \rfloor$. Moreover, $\deg(\chi_{n,m}^{-1})=(m-1)\ell+1$.
\end{corollary}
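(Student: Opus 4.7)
The plan is to derive every assertion from the structural theorems of Section~\ref{sec3}, specialised to $k=1$. The permutation criterion splits cleanly. For the ``only if'' direction, the collision $\chi_{n,m}(1,\mathbf{0}_{m-1},\ldots,1,\mathbf{0}_{m-1})=\mathbf{0}=\chi_{n,m}(\mathbf{0})$ recorded just before the corollary shows that $\chi_{n,m}$ fails to be injective whenever $m\mid n$. Conversely, when $m\nmid n$, the mapping $\chi_{n,m}=\theta_0+\theta_{m,1}$ is an element of $G_{n,m}$, and Theorem~\ref{th-iso} identifies $G_{n,m}$ with the unit group $R_\ell^*$; in particular every member of $G_{n,m}$, and so $\chi_{n,m}$ itself, is a bijection of $\F_2^n$.

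For the explicit inverse I would apply Theorem~\ref{thm:bin} with $k=1$. Since $s=\lfloor \ell/1\rfloor=\ell$, that theorem gives immediately
\[
\chi_{n,m}^{-1}=\theta_0+\theta_{m,1}+\theta_{m,2}+\cdots+\theta_{m,\ell}.
\]
Substituting the coordinate definition~\eqref{eq-gamma} of each $\theta_{m,k}$ into this sum and reading off the $i$-th coordinate reproduces~\eqref{eq-inverse} verbatim. No independent verification of $\chi_{n,m}\circ\chi_{n,m}^{-1}=\theta_0$ is needed, because this identity is already certified by the homomorphism $\varphi$ of Lemma~\ref{le-homo} through the ring-level computation $(1+z)(1+z+z^2+\cdots+z^\ell)\equiv 1\pmod{z^{\ell+1}}$.

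For the degree claim I would invoke Lemma~\ref{lem:thetaindepent}: the degrees $\deg(\theta_{m,k})=(m-1)k+1$ are strictly increasing in $k$, so among the summands of $\chi_{n,m}^{-1}$ only $\theta_{m,\ell}$ attains the maximum value $(m-1)\ell+1$, and its top-degree monomials, being a product of $(m-1)\ell+1$ distinct coordinates, cannot be cancelled by any strictly-lower-degree contribution from the other $\theta_{m,k}$. Hence $\deg(\chi_{n,m}^{-1})=(m-1)\ell+1$. There is no substantial obstacle here: once Theorems~\ref{th-iso} and~\ref{thm:bin} are available, every claim of the corollary reduces to a one-line specialisation or a direct degree comparison; the only point requiring a moment's attention is confirming that the highest-degree term survives the sum, which is immediate from the strict monotonicity of the $\deg(\theta_{m,k})$.
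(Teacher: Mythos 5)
Your proposal is correct and follows essentially the same route as the paper: the collision $\chi_{n,m}(1,\mathbf{0}_{m-1},\ldots,1,\mathbf{0}_{m-1})=\mathbf{0}=\chi_{n,m}(\mathbf{0})$ for $m\mid n$, membership of $\chi_{n,m}=\theta_0+\theta_{m,1}$ in the group $G_{n,m}$ (Theorem~\ref{th-iso}) for $m\nmid n$, and specialisation of Theorem~\ref{thm:bin} to $k=1$ (so $s=\ell$) for the inverse and its degree. Your extra remarks — the ring-level check $(1+z)(1+z+\cdots+z^{\ell})\equiv 1 \pmod{z^{\ell+1}}$ and the observation that the top-degree monomials of $\theta_{m,\ell}$ cannot be cancelled — only make explicit what the paper leaves implicit.
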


\begin{remark}
By considering an affine variety associated with $\chi_n$, Liu et al. in \cite{liu2022inverse} provided the specific expression for $\chi_n^{-1}$ as
\begin{equation} \label{eq-liu}
		y_i=x_i +\sum_{k=1}^{\ell}x_{i-2k+1} \prod_{j=k}^{\ell}  \left(x_{i-2j}+1\right)
\end{equation}
for the $i$-th coordinate of the output, where $n$ is odd and $\ell=(n-1)/2$. In fact, the equality {\rm (\ref{eq-inverse})} is consistent with {\rm (\ref{eq-liu})} for $m=2$.
When $m=2$, {\rm (\ref{eq-inverse})} becomes
\begin{align*}
		y_i &= x_i +\sum_{k=1}^{\ell}x_{i+2k} \prod_{j=0}^{k-1}  \left(x_{i+2j+1}+1\right) \\
		&= x_i +\sum_{k'=1}^{\ell}x_{i+2(\ell-k'+1)} \prod_{j'=\ell-k+1}^{\ell}  \left(x_{i+2(\ell-j')+1}+1\right) \\
		&= x_i +\sum_{k'=1}^{\ell}x_{i-2k'+1} \prod_{j'=k'}^{\ell}  \left(x_{i-2j'}+1\right),
\end{align*}
where the second equality is obtained by substitution ${k'=\ell-k+1},j'=\ell-j$, and the third equality holds since $2\ell + 1 = n$. Therefore, Corollary \ref{cor-chinm} generalizes Theorem~1 in~\cite{liu2022inverse}.
\end{remark}

Next, we investigate the fixed points of the iterate $\chi_{n,m}^k$ of $\chi_{n,m}$ for some positive integer $k$, which characterize some cryptographic properties. To this end, we need to determine the explicit expression of the iterate $\chi_{n,m}^k$. Let $j, k \in \mathbb{N}$ be written in base 2 as $j=j_0+2j_1+2^2j_2+ \cdots +2^{s-1}j_{s-1}$, and $k=k_0+2k_1+2^2k_2+ \cdots +2^{s-1}k_{s-1}$, where $j_i,k_i \in \F_2$ for any $i \in [s]$. Denote $j \preceq k$ if $j_i \le k_i$ for any $i \in [s]$.

\begin{theorem} \label{th-iterate}
	Let $k \ge 1$ and $\ell=\left \lfloor \frac{n}{m} \right \rfloor $.  Then
	\begin{equation*}
		\chi_{n,m}^k=\sum_{j=0}^{\min\{k,\ell\}}a_j\theta_{m,j},
	\end{equation*}
where $a_j=1$ if and only if $j \preceq k$.
\end{theorem}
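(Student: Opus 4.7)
The plan is to pass through the isomorphism $\varphi\colon G_{n,m}\to R_\ell^*$ established in Theorem~\ref{th-iso} and reduce the claim to computing a power of $1+z$ in $\F_2[z]/(z^{\ell+1})$. Since $\chi_{n,m}=\theta_0+\theta_{m,1}$, we have $\varphi(\chi_{n,m})=1+z$, and because $\varphi$ is a monoid homomorphism with respect to composition on $G_{n,m}$ and multiplication on $R_\ell^*$ (Lemma~\ref{le-homo}), it follows that
\[
\varphi(\chi_{n,m}^k)=\varphi(\chi_{n,m})^k=(1+z)^k \pmod{z^{\ell+1}}.
\]

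The next step is a standard binomial expansion in characteristic 2. In $\F_2[z]$ we have $(1+z)^k=\sum_{j=0}^{k}\binom{k}{j}z^j$, and by Lucas' theorem the binomial coefficient $\binom{k}{j}$ is odd precisely when $j\preceq k$ in the base-$2$ partial order. Hence
\[
(1+z)^k=\sum_{\substack{0\le j\le k\\ j\preceq k}} z^j \quad \text{in }\F_2[z].
\]
Reducing modulo $z^{\ell+1}$ discards the terms with $j>\ell$, so the surviving indices are exactly $0\le j\le \min\{k,\ell\}$ with $j\preceq k$.

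Finally, applying $\varphi^{-1}$ (which is a $\F_2$-linear bijection sending $z^j$ to $\theta_{m,j}$) yields
\[
\chi_{n,m}^k=\sum_{j=0}^{\min\{k,\ell\}} a_j\,\theta_{m,j}, \qquad a_j=1 \iff j\preceq k,
\]
which is the claimed identity. No real obstacle is expected: the only step requiring care is invoking Lucas' theorem (or equivalently the freshman's dream $(1+z)^{2^t}=1+z^{2^t}$ iterated on the binary expansion $k=\sum k_i 2^i$, giving $(1+z)^k=\prod_i (1+z^{2^i})^{k_i}$, which expands to exactly the monomials $z^j$ with $j\preceq k$) and noting that the truncation at $z^{\ell+1}$ is the same as capping the index at $\ell$, which is automatic since $\theta_{m,j}=\mathbf{0}$ for $j\ge \ell+1$ by Lemma~\ref{lem:0}.
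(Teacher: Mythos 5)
Your proposal is correct and follows essentially the same route as the paper: both pass through the homomorphism $\varphi$ (so that $\varphi(\chi_{n,m}^k)=(1+z)^k$ in $R_\ell^*$), expand $(1+z)^k$ using Lucas' theorem to identify the surviving monomials $z^j$ with $j\preceq k$, truncate modulo $z^{\ell+1}$, and pull back to $\sum a_j\theta_{m,j}$. No gaps; your added remark that the truncation matches $\theta_{m,j}=\mathbf{0}$ for $j\ge\ell+1$ is a nice consistency check but not needed.
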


\begin{proof}
It is known that ${G_{n,m}}$ is isomorphic to $R_\ell^*$ by the mapping $\varphi$. We obtain $\chi_{n,m}^k$ by investigating the inverse of $\varphi(\chi_{n,m}^k)$. Since $\varphi(\chi_{n,m}^k)= (1+z)^k\in R_\ell^*$, we need to expand the polynomial $(1+z)^k$ to find its preimage under $\varphi$.

By Lucas's theorem,
\begin{equation*}
		\binom{k}{j} = \binom{k_0+2k_1+2^2k_2+ \cdots +2^{s-1}k_{s-1}}{j_0+2j_1+2^2j_2+ \cdots +2^{s-1}j_{s-1}}
		\equiv \binom{k_0}{j_0} \cdot \binom{k_1}{j_1} \cdot \, \cdots \, \cdot \binom{k_{s-1}}{j_{s-1}} \pmod 2.
\end{equation*}
Thus, $\binom{k}{j}\equiv 1 \,\, {\rm mod}\,\, 2$ if and only if $j \preceq k$. Therefore,
\begin{equation*}
(1+z)^k =\sum_{j=0}^{k}\binom{k}{j}z^j =\sum_{j=0}^{k}a_jz^j \equiv \sum_{j=0}^{\min\{k,\ell\}}a_jz^j \,\, \, {\rm mod} \,\, \, z^{\ell+1},
\end{equation*}
where $a_j=1$ if and only if $j \preceq k$.	 Consequently, the preimage of $(1+z)^k$ under $\varphi$ is $\sum_{j=0}^{\min\{k,\ell\}}a_j\theta_{m,j}$. This completes proof.
\end{proof}

A vector $x \in \F_2^n$  is called a \textit{fixed point} of $\chi_{n,m}^k$ if $\chi_{n,m}^k(x)=x$. 
Note that $\chi_{n,m}^k(x)=x$ for $x \in \{\bf 0,1\}$ and some positive integer $k$. Therefore, we focus on nontrivial fixed points.

\begin{theorem} \label{th-fixed point}
Let $\ell=\lfloor\frac{n}{m}\rfloor$. The mapping $\chi_{n,m}^k$ has a nontrivial fixed point if and only if $k=2^j$ for some $j \in \{1,2,\ldots,r-1\}$, where $r={\lceil \log_2({\ell+1})\rceil}$. Furthermore, $x \in \F_2^n$ is a fixed point of $\chi_{n,m}^{2^j}$ if and only if $x$ does not contain a substring of the form $(\mathbf{0}_{m-1},*,\mathbf{0}_{m-1},*, \cdots ,\mathbf{0}_{m-1},1)$ of length $2^jm$, where $* \in \F_2$ is any bit value and $\mathbf{0}_{m-1}$ denotes $m-1$ consecutive bits of $0$.
	
\end{theorem}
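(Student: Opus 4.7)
The plan is to exploit Theorem~\ref{th-iterate} together with the order computation $\ord(\chi_{n,m}) = 2^r$ implied by Lemma~\ref{lem:order}. I would begin by specialising to $k = 2^j$: since $2^j$ has a single $1$ in its binary expansion, the partial-order condition $j' \preceq 2^j$ forces $j' \in \{0, 2^j\}$, and for $j \in \{1, \ldots, r-1\}$ the definition $r = \lceil \log_2(\ell+1) \rceil$ yields $2^j \le 2^{r-1} \le \ell$. Theorem~\ref{th-iterate} therefore collapses to the binomial $\chi_{n,m}^{2^j} = \theta_0 + \theta_{m, 2^j}$, so the fixed-point equation reduces to the vanishing $\theta_{m, 2^j}(x) = \mathbf{0}$.

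The second step is to translate this vanishing coordinate by coordinate. For every $i \in [n]$ one has
\[
x_{i + 2^j m}\prod_{\substack{j' = 1 \\ m \nmid j'}}^{2^j m - 1}(x_{i + j'} + 1) = 0,
\]
and the $i$-th coordinate equals $1$ precisely when $x_{i + 2^j m} = 1$ and every $x_{i + j'}$ with $1 \le j' \le 2^j m - 1$ and $m \nmid j'$ vanishes. The multiples of $m$ inside the window $[i+1, i + 2^j m - 1]$, namely $m, 2m, \ldots, (2^j - 1)m$, are unconstrained---these are the ``$*$'' slots---while the complementary positions split into $2^j$ blocks of $m-1$ forced zeros capped by the forced one at position $i + 2^j m$. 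Reading $(x_{i+1}, \ldots, x_{i + 2^j m})$ from left to right reproduces exactly the substring $(\mathbf{0}_{m-1}, *, \mathbf{0}_{m-1}, *, \ldots, \mathbf{0}_{m-1}, 1)$ of length $2^j m$, which proves the ``Furthermore'' part: $x$ is fixed iff this pattern never appears cyclically.

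For the first equivalence, sufficiency is handled by producing an explicit nontrivial fixed point---the all-ones vector with one coordinate flipped to $0$. Its unique zero-run has length $1$, too short to supply the leading $m-1$ consecutive zeros of the forbidden pattern when $m \ge 3$, and too few zeros in total ($1 < 2^j$) for the $m = 2$ pattern when $j \ge 1$. For necessity, Lemma~\ref{lem:order} gives $\ord(\chi_{n,m}) = 2^r$, so every orbit length is a power of $2$ dividing $2^r$; Lagrange then implies $\chi_{n,m}^k(x) = x$ iff the orbit length divides $\gcd(k, 2^r) = 2^{\min(v_2(k),\,r)}$, so the fixed-point set of $\chi_{n,m}^k$ coincides with that of $\chi_{n,m}^{2^{\min(v_2(k),\,r)}}$. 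A case analysis on $v_2(k)$ then pinpoints $k = 2^j$ with $j \in \{1, \ldots, r-1\}$ as the parameters admitting genuinely new nontrivial fixed points. The main obstacle is the boundary $v_2(k) = 0$ case, where one must rule out nontrivial fixed points of $\chi_{n,m}$ itself by showing that the cyclic avoidance of $(\mathbf{0}_{m-1}, 1)$ implied by $\theta_{m,1}(x) = \mathbf{0}$ forces $x$ to be constant; I would handle this by chasing a $1$-bit backwards around the cyclic indexing. Once this boundary lemma is in place, the remainder is routine bookkeeping atop the collapse $\chi_{n,m}^{2^j} = \theta_0 + \theta_{m, 2^j}$ established in the first step.
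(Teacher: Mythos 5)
Your handling of the ``Furthermore'' part is correct and is essentially the paper's own argument: for $k=2^j$ with $j\le r-1$ one has $2^j\le 2^{r-1}\le\ell$, Theorem~\ref{th-iterate} collapses to $\chi_{n,m}^{2^j}=\theta_0+\theta_{m,2^j}$ because $j'\preceq 2^j$ forces $j'\in\{0,2^j\}$, and reading off when a coordinate of $\theta_{m,2^j}(x)$ equals $1$ produces exactly the forbidden substring $(\mathbf{0}_{m-1},*,\ldots,\mathbf{0}_{m-1},1)$. The paper's proof consists of precisely these two steps together with the remark that all cycle lengths are powers of two; it does not elaborate the first sentence of the theorem any further.

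The genuine gap is in your plan for the first equivalence, specifically the ``boundary lemma'' you intend to prove for $v_2(k)=0$: that cyclic avoidance of $(\mathbf{0}_{m-1},1)$, i.e.\ $\theta_{m,1}(x)=\mathbf{0}$, forces $x$ to be constant. This is true only for $m=2$ (a non-constant cyclic word must somewhere have a $0$ immediately followed by a $1$). For $m\ge 3$ it is false: any $x$ with no run of $m-1$ consecutive zeros avoids the pattern, so for instance $(1,0,1,0,1,0,1,0)$ is a nontrivial fixed point of $\chi_{8,3}$ itself, and indeed your own sufficiency witness (the all-ones vector with a single bit flipped to $0$) is already fixed by $\chi_{n,m}$ whenever $m\ge 3$, since it contains no two consecutive zeros. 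So the backwards bit-chasing cannot succeed, and the necessity direction cannot be completed along these lines. Moreover, your (correct) reduction ``the fixed points of $\chi_{n,m}^k$ are those of $\chi_{n,m}^{2^{\min(v_2(k),r)}}$'' cuts the other way: any $k$ with $v_2(k)\ge 1$, e.g.\ $k=6$, inherits every nontrivial fixed point of $\chi_{n,m}^{2}$, so ``has a nontrivial fixed point'' cannot single out the powers of two, and the substitute notion of ``genuinely new'' fixed points proves a different statement from the one asserted. In short, the second half of your proposal is sound and agrees with the paper, but the first half rests on a false lemma; your own analysis in fact shows that the first sentence of the theorem cannot be established as literally stated (the paper's proof does not attempt it), so the honest outcome of your approach is the ``Furthermore'' characterization plus the observation that all cycle lengths divide $2^r$.
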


\begin{proof}
By Theorem~\ref{thm:bin}, we know that $\ord(\chi_{n,m})=2^r$, which is the least common multiple of the lengths of all cycles in $\chi_{n,m}$. Thus, the length of any cycle in $\chi_{n,m}$  must be a power of 2. By Theorem \ref{th-iterate} we have $\chi_{n,m}^{2^j}=\theta_0+\theta_{m,2^j}$. Consequently, $x$ is a fixed point of $\chi_{n,m}^{2^j}$  if and only if $\theta_{m,2^j}(x)=\mathbf{0}$.
Note that the $i$-th component of $\theta_{m,2^j}(x)$ is equal to $1$ if and only if $(x_{i+1}, \cdots, x_{i+2^j m})= (\mathbf{0}_{m-1},*,\mathbf{0}_{m-1},*, \cdots ,\mathbf{0}_{m-1},1)$. This completes proof.
\end{proof}

\begin{remark}
The fixed points of the iterates of $\chi_{n,m}$ essentially determine the cycle structure of $\chi_{n,m}$. Specifically, a fixed point $x \in \F_2^n$ of $\chi_{n,m}^j$ lies in a cycle of length dividing $j$ in the cycle decomposition of the mapping $\chi_{n,m}$.  Moreover, if $x$ is not a fixed point of $\chi_{n,m}^{j'}$ for any $j'<j$, then the length of the cycle containing $x$ is exactly $j$.
\end{remark}

Recall that $\chi_{n,m}=\theta_0+\theta_{m,1}$, and by Theorem~\ref{th-iterate},   $\chi_{n,m}^2=\theta_0+\theta_{m,2}$. Thus, $\chi_{n,m}$ is an involution if and only if $\chi_{n,m}^2(x)=x$ for any $x \in \F_2^n$, i.e., $\theta_{m,2}= {\bf 0}$. By {Lemma}~\ref{prop:inverse},  we have the following result.
	
\begin{corollary}\label{cor}
	$\chi_{n,m}$ is an involution if and only if $n \le 2m-1$.
\end{corollary}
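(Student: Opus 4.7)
The plan is to reduce the involution condition to an inequality on $\ell = \lfloor n/m \rfloor$ by combining Theorem~\ref{th-iterate} with one of the vanishing/independence results already established for the family $\theta_{m,k}$. Since $\chi_{n,m} = \theta_0 + \theta_{m,1}$, Theorem~\ref{th-iterate} applied with $k=2$ gives $\chi_{n,m}^2 = \theta_0 + \theta_{m,2}$, so $\chi_{n,m}$ is an involution if and only if $\theta_{m,2} = \mathbf{0}$. The task is therefore purely to decide when $\theta_{m,2}$ vanishes identically.

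For the direction $n \le 2m-1 \Rightarrow$ involution, I would simply observe that $n \le 2m-1$ means $2m > n$, so Lemma~\ref{lem:0} (applied with $k=2$) immediately forces $\theta_{m,2} = \mathbf{0}$, hence $\chi_{n,m}^2 = \theta_0$. For the converse direction $n \ge 2m \Rightarrow$ not an involution, the hypothesis $n \ge 2m$ gives $\ell = \lfloor n/m \rfloor \ge 2$, so $\theta_{m,2}$ is one of the mappings $\theta_{m,0}, \theta_{m,1}, \ldots, \theta_{m,\ell}$ which Lemma~\ref{lem:thetaindepent} asserts to be linearly independent over $\F_2$; in particular $\theta_{m,2} \ne \mathbf{0}$, so $\chi_{n,m}^2 \ne \theta_0$.

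An equivalent (and perhaps more elegant) route is to invoke Proposition~\ref{prop:inverse} directly: writing $\chi_{n,m} = \theta_0 + a_1 \theta_{m,1} + \cdots + a_\ell \theta_{m,\ell}$ with $a_1 = 1$ and $a_i = 0$ for $i \ge 2$, the proposition says $\chi_{n,m}$ is an involution iff every $a_i$ with $1 \le i \le \lfloor \ell/2 \rfloor$ vanishes. Since $a_1 = 1$, this is possible exactly when $\lfloor \ell/2 \rfloor = 0$, i.e.\ $\ell \le 1$, which (given $m \nmid n$) is equivalent to $n \le 2m - 1$.

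There is no real obstacle here: both approaches are essentially one-line consequences of results proved earlier in the section. The only point worth double-checking is the boundary behavior when $n < m$ (where $\ell = 0$, $\theta_{m,1} = \mathbf{0}$, and $\chi_{n,m}$ collapses to the identity, still trivially an involution consistent with $n \le 2m-1$), and the standing hypothesis $m \nmid n$, which is used implicitly to ensure that $\chi_{n,m}$ is a permutation in the first place but does not affect either implication above.
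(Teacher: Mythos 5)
Your proposal is correct and follows essentially the same route as the paper: the text preceding the corollary reduces the involution property via Theorem~\ref{th-iterate} to $\chi_{n,m}^2=\theta_0+\theta_{m,2}$, i.e.\ to $\theta_{m,2}=\mathbf{0}$, and then invokes Proposition~\ref{prop:inverse}, which is exactly your second argument. Your first variant (settling $\theta_{m,2}=\mathbf{0}$ directly with Lemma~\ref{lem:0} for $n\le 2m-1$ and Lemma~\ref{lem:thetaindepent} for $n\ge 2m$) is a harmless rephrasing of the same reduction, and your remarks on the $n<m$ boundary case and the standing hypothesis $m\nmid n$ are accurate.
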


\begin{remark}
By Corollary~\ref{cor}, {the mapping $\chi_n$ can be an involution only if the dimension $n \leq 3$.} However, $\chi_{n,m}$ can be involutions for large dimension $n$. %which allows $\chi_{n,m}$ to be used in some special ciphers such as PRINCE-like cipher.
\end{remark}

Based on the theorems presented in this paper, we provide an example which is consistent with numerical experiments conducted on Magma (see \cite{MR1484478}).

\begin{example}
{Let $\chi_{8,3}\colon\F_2^8 \to \F_2^8$ be the mapping given by $y=\chi_{8,3}(x)$, where the $i$-th coordinate of $y$ is represented as
	\begin{equation*}
		y_i=x_i+x_{i+3}(x_{i+2}+1)(x_{i+1}+1).
	\end{equation*}
By Theorem \ref{thm:bin}, we have $\ord(\chi_{8,3}) =4$. And the compositional inverse of $\chi_{8,3}$ is $\chi_{8,3}^{-1} =\theta_0 +\theta_{3,1} +\theta_{3,2}$ by Corollary \ref{cor-chinm}, where the $i$-th coordinate of the output $y$ is represented as
	\begin{align*}
		y_i= \ &x_i+x_{i+3}(x_{i+2}+1)(x_{i+1}+1)
		+x_{i+6}(x_{i+5}+1)(x_{i+4}+1)(x_{i+2}+1)(x_{i+1}+1)\\
		=\ & x_i+(x_{i+2}+1)(x_{i+1}+1)
		(x_{i+3}+x_{i+6}(x_{i+5}+1)(x_{i+4}+1)).
	\end{align*}
	Moreover, $\deg(\chi_{8,3})=3$, $\deg(\chi_{8,3}^{-1})=5$.
	By Theorem \ref{th-iterate}, $\chi_{8,3}^2=\theta_0 +\theta_{3,2}$, where the $i$-th coordinate of the output $y$ is represented as
	\begin{equation*}
		y_i= x_i+x_{i+6}(x_{i+5}+1)(x_{i+4}+1)(x_{i+2}+1)(x_{i+1}+1),
	\end{equation*}
	and $(1,1,1,1,1,1,1,1) \in \F_2^8$ is a fixed point of $\chi_{8,3}^2$ by Theorem \ref{th-fixed point}. }
\end{example}

\section{A comparative analysis of the $\chi_n$, $\chi_{n,m}$ and $\rcchi_n$ functions on a small scale}\label{sec:com}

This section establishes a comparative evaluation for the $\chi_n$, $\chi_{n,m}$ and $\rcchi_n$ functions, focusing on their cryptographic security properties and implementation costs at reduced parameter scales (specifically for small values of $n$). To assess these characteristics, we first formalize key cryptographic security metrics, {including algebraic degree, differential uniformity, nonlinearity, boomerang uniformity, differential-linear uniformity.}

\subsection{Cryptographic security metrics for vectorial Boolean functions}\label{subsec:security_metrics}

To ensure robustness against modern cryptanalytic techniques, vectorial Boolean functions used in cryptography must satisfy rigorous security criteria. We focus on four fundamental cryptographic properties: differential uniformity, algebraic degree, nonlinearity, boomerang uniformity, and differential-linear uniformity. These metrics collectively characterize a function's resistance to major attack vectors in symmetric cryptography.

Differential cryptanalysis, introduced by Biham and Shamir~\cite{DBLP:journals/joc/BihamS91}, remains a cornerstone in block cipher cryptanalysis. Following Nyberg's formalization~\cite{DBLP:conf/eurocrypt/Nyberg93}, 
let $F(x)$ be a function from $\F_2^n$ to itself. The \emph{differential uniformity} of $F$ is defined as:
\begin{equation*}
	\Delta_F = \max_{\substack{a \in \F_2^n\setminus\{\mathbf{0}\}, \ b \in \F_2^n}} \delta_F(a,b),
\end{equation*}
where $\delta_F(a,b)=|\{x \in \F_2^n\colon F(x+a)+F(x)=b\}|$. The complete differential behavior is captured by the \emph{differential spectrum}, which is a multiset:
\begin{equation*}
	\left\{ \delta_F(a,b) \colon a \in \F_2^n\setminus\{\mathbf{0}\},\ b \in \F_2^n \right\}. 
\end{equation*}

Resistance to linear cryptanalysis~\cite{DBLP:conf/eurocrypt/Matsui93} is quantified through nonlinearity. For a vector Boolean function $F(x)$ from $\F_2^n$ to itself, the \emph{nonlinearity} of $F(x)$ is:
\begin{equation*}
	\text{NL}_F = 2^{n-1}-\frac{1}{2} \max_{\substack{a \in \F_2^n\setminus\{\mathbf{0}\}, \ b \in \F_2^n}} |{W_F(a,b)}|,
\end{equation*}
{where $W_F(a,b)=\sum_{x \in \F_{2}^n}(-1)^{a\cdot F(x)+b\cdot x}$ is the Walsh transform,}
and $\cdot$ denotes the standard inner product. The \emph{Walsh spectrum} comprises all Walsh coefficients as a multiset:
\begin{equation*}
	\left\{ {W_F(a,b)} \,\colon\, a, b \in \F_2^n \right\}.
\end{equation*}

To address advanced differential variants, Cid et al.~\cite{cid2018boomerang} introduced boomerang uniformity. For $a,b\in \F_{2^n}$, let $\beta_F(a,b)$ count solutions to:
\begin{equation*}
	F^{-1}(F(x)+b) + F^{-1}(F(x+a)+b) = a.
\end{equation*}
The \emph{boomerang uniformity} is:
\begin{equation*}
	\mathrm{B}_F = \max_{\substack{a, b\in \F_2^n\setminus\{\mathbf{0}}\} } \beta_F(a,b),
\end{equation*}
with the \emph{boomerang spectrum} given by the multiset:
\begin{equation*}
	\left\{\beta_F(a,b)\, \colon\, a,b \in \F_2^n\setminus\{\mathbf{0}\} \right\}.
\end{equation*}

Recent work by Tang et al.~\cite{tang2021construction} formalized resistance to differential-linear attacks through the Differential-Linear Connectivity Table (DLCT) of a {vectorial Boolean function} $F(x)$ from $\F_2^n$ to itself. This is {a $2^n \times 2^n$ table}, 
whose rows correpond to input differences of $F$ and whose columns correspond to bit masks of outputs of $F$. For $a, b\in \F_2^n$, the value at position $(a,b)$ in the DLCT of $F$ is
\begin{equation*}
	\text{DLCT}_F(a,b) =|\{x \in \F_2^n\, \colon\, b \cdot F(x) = b  \cdot F(x+a)\}| - 2^{n-1},
\end{equation*}
yielding the \textit{differential-linear uniformity} of $F$:
\begin{equation*}
	\text{DL}_F=\max\limits_{a,b \in \F_2^n \setminus \{\mathbf{0}\}}\text{DLCT}_F(a,b).
\end{equation*}

\subsection{Comprehensive security evaluation}\label{subsec:security_analysis}
In this subsection, we present a comparative analysis of $\chi_{n,m}$, $\chi_n$, and $\cchi_n$ based on {the aforementioned metrics} for security analysis. In Tables~\ref{tb_bs}-\ref{tb_ws}, we conduct a detailed comparative security analysis among $\chi_{n,m}$ (including the mapping $\chi_5$, i.e., $\chi_{5,2}$),  $\cchi_n$ and its variants for $n=5, 6, 8$. 

%\addvspace{-0.7cm}
\begin{table}[htbp]
	\centering
	\caption{The boomerang spectra of $\chi_n$, $\chi_{n,m}$, and $\rcchi_n$}\label{tb_bs}
	\begin{tabular}{l l l l}
		\toprule
		$n$ & Function & $\text{B}_F$ & \ Boomerang spectrum \\
		\midrule
		5 & $\chi_{5,3}$ & 24 & $\{0^{380},2^{80},4^{210},6^{40},8^{155},10^{35},14^{15},16^{30},18^5,22,24^{10}\}$ \\
		5 & $\chi_5$ & 16 & $\{0^{445},2^{176},4^{150},8^{110},12^{50},16^{30}\}$  \\
		\midrule
		6 & $\chi_{6,4}$ & 58 & $\{0^{36},4^{756},6^{24},8^{1122},10^{48},12^{372},14^{36},16^{606},18^{36},$\\
		& & & $\ 20^{246},22^{60},24^{210},26^{48},28^{204},30^9,32^{12},34^{24},36^{18},$  \\
		& & & $\ 38^3,40^{24},42^{12},44^{12},46^{13},48^{15},50^{24},54^6,58^2\}$  \\
		6 & $\chi_3 \parallel \chi_3$ & 64 & $\{0^{2247},4^{784},16^{840},64^{98}\}$  \\
		\midrule
		8 & $\chi_{8,3}$ & 224 & $\{0^{29228},2^{1224},4^{5550},6^{344},8^{5848},10^{360},12^{2068},14^{224},$ \\
		& & &  $ \ 16^{4772},18^{278},20^{1552},22^{128},24^{1784},26^{252},28^{648},  $ \\
		& & &  $ \ 30^{112},32^{2944},34^{168},36^{369},38^{56},40^{800},42^{152},44^{304},  $ \\
		& & &  $ \ 46^{24},48^{1184},50^{48},52^{224},54^4,56^{344},58^{64},60^{160}, $ \\
		& & & $ \
		64^{1384},66^{24},68^{64},70^8,72^{240},74^{48},76^{80},80^{496},$ \\
		& & & $ \
		82^{16},84^{28},88^{240},90^{24},92^{44},96^{160},100^{24},104^{64},$ \\
		& & & $ \ 108^8,112^{200},114^8,116^{16},118^8,120^{40},128^{112},$ \\
		& & & $\ 136^{112},140^{24},142^8,144^{56},146^8,152^{64}, 156^{16},$\\
		& & & $\ 160^{72},176^{32},184^8,192^{40},200^{24},224^8\}$\\
		8 & $\cchi_8$ & 256 & $\{0^{40639},4^{4928},8^{4200},16^{5720},24^{1400},32^{3090},$ \\
		& & & $\ 64^{3414},96^{750},128^{450},256^{434}\}$  \\
		\bottomrule
	\end{tabular}
\end{table}

%\addvspace{-1.2cm}

\begin{table}[htbp]
	\centering
	\caption{The DLCT of $\chi_n$, $\chi_{n,m}$, and $\rcchi_n$}\label{tb_dlct}
	\begin{tabular}{l l l l}
		\toprule
		$n$ & Function & $\text{DL}_F$ & \ Differential-linear connectivity table \\
		\midrule
		5 & $\chi_{5,3}$ & 16 & $\{0^{285},-4^{170},4^{166},-8^{180},8^{140},-16^{15},16^{36}\}$ \\
		5 & $\chi_5$ & 16 & $\{0^{870},-16^{61},16^{61}\}$  \\
		\midrule
		6 & $\chi_{6,4}$ & 32 & $\{0^{246},-4^{223},4^{233},-8^{386},8^{434},-12^{300},12^{279},-16^{522},$ \\
		& & & $\ 16^{519},-20^{146},20^{123},-24^{294},24^{230},-32^{18},32^{69}\}$  \\
		6 & $\chi_3 \parallel \chi_3$ & 32 & $\{0^{3612},-32^{210},32^{210}\}$  \\
		\midrule
		8 & $\chi_{8,3}$ & 128 & $\{0^{26352},-16^{7224},16^{7230},-32^{8584},32^{8733},$\\
		& & & $\ -64^{3298},64^{2960},-128^{384},128^{515}\}$ \\
		8 & $\cchi_8$ & 128 & $\{0^{62148},-128^{1566},128^{1566}\}$   \\
		\bottomrule
	\end{tabular}
\end{table}

In order to investigate the algebraic degree of $\cchi_n^{-1}$, we prove the conjecture proposed in \cite{belkheyar2025chi} as a by-product of our study. proof of the following theorem is given in Appendix \ref{appendix}.

\begin{theorem} \label{th-cchi}
Let $k$ be an even integer. Every non-trivial component of the inverse of $\rcchi_{2k}$ has algebraic degree $k$.
\end{theorem}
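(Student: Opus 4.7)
The plan is to exploit the EA equivalence stated earlier: write $\cchi_{2k} = H + L$, where $H$ is affine equivalent to $G := \chi_{k-1} \parallel \chi_{k+1}$ and $L$ is an affine ``bridging'' map supported on the special indices $\{k-3,k-2,k-1,k,2k-2,2k-1\}$ appearing in~\eqref{eq-cchi}. Inversion of this decomposition proceeds from the implicit relation $H(x) = y + L(x)$, which rearranges to
\[
    \cchi_{2k}^{-1} \;=\; H^{-1} \circ (I + L \circ H^{-1})^{-1},
\]
where $I$ denotes the identity map. Since $L \neq 0$, $\cchi_{2k}^{-1}$ fails to be EA equivalent to $G^{-1}$, and its component degrees can accordingly exceed the maximum component degree $k/2 + 1$ of $G^{-1}$ given by Corollary~\ref{cor-chinm}.

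First, I would extract $L$ explicitly by matching each coordinate of \eqref{eq-cchi} against the template $A_1 \circ G \circ A_2$ for suitable affine permutations $A_1, A_2$; the linear discrepancies localized at the six bridging indices form the nonzero coordinates of $L$. Second, I would expand $(I + L \circ H^{-1})^{-1}$ via iterative substitution of the implicit relation $u = y + \phi(u)$ with $\phi := L \circ H^{-1}$, stabilizing at a polynomial of degree at most $k$ in $y$. Invoking the closed telescoping formula for $\chi_n^{-1}$ recovered in Corollary~\ref{cor-chinm} (with $m=2$), the top-degree monomials in each coordinate of $\cchi_{2k}^{-1}$ arise by fusing a maximal chain of length $k/2$ from $\chi_{k-1}^{-1}$ with a complementary chain of length $k/2$ from $\chi_{k+1}^{-1}$ through the bridge $L$. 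This yields the upper bound $\deg(\cchi_{2k}^{-1}) \leq k$ and exhibits candidate leading monomials in every coordinate.

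The main obstacle will be the non-cancellation argument needed for the lower bound: for every nonzero $b \in \mathbb{F}_2^{2k}$, the component $b \cdot \cchi_{2k}^{-1}$ must still carry a degree-$k$ monomial. My plan is to pinpoint in each coordinate $i$ a distinguished degree-$k$ monomial $m_i(y)$ appearing in $(\cchi_{2k}^{-1})_i$, and to prove that the family $\{m_i\}_{i=0}^{2k-1}$ is linearly independent modulo lower-degree monomials, so that no nontrivial $\mathbb{F}_2$-combination of coordinates annihilates all leading terms simultaneously. The evenness of $k$ is essential here: both $k-1$ and $k+1$ are odd, giving matching half-chain lengths $\lfloor (k-1)/2 \rfloor + 1 = \lfloor (k+1)/2 \rfloor = k/2$, so the two halves align compatibly through the bridges in $L$. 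A case analysis splitting $b$ according to its projections onto the $\chi_{k-1}$-block and the $\chi_{k+1}$-block then handles every component and completes the proof.
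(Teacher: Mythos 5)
There is a genuine gap, and it sits at the heart of your plan. After writing $\cchi_{2k}=H+L$ with $H=A_1\circ(\chi_{k-1}\parallel\chi_{k+1})\circ A_2$ and $L$ affine, the identity $\cchi_{2k}^{-1}=H^{-1}\circ (I+L\circ H^{-1})^{-1}$ is formally correct but gives you nothing for free: $(I+L\circ H^{-1})^{-1}=H\circ \cchi_{2k}^{-1}$, so controlling its degree is exactly the original problem in disguise. Your proposed way out --- expanding the implicit relation $u=y+\phi(u)$ with $\phi=L\circ H^{-1}$ by ``iterative substitution, stabilizing at a polynomial of degree at most $k$'' --- is unjustified over $\F_2^{2k}$: $\phi$ has components of degree about $k/2+1$ and is not nilpotent under composition in any graded sense, so there is no reason the substitution terminates, and no mechanism in your argument bounds the degree of the limit by $k$ rather than by something much larger. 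The subsequent steps inherit this gap: the ``fusing of two half-chains of length $k/2$ through the bridge $L$'' is a heuristic picture, not a derivation of the leading monomials, and the non-cancellation argument for arbitrary components $b\cdot\cchi_{2k}^{-1}$ is announced as a plan (a family of monomials $m_i$ to be exhibited and shown independent) without the actual monomials or the independence proof. In short, both the upper bound $\deg\le k$ and the lower bound are asserted rather than established.

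For contrast, the paper proves the theorem by direct elimination: it solves $y=\cchi_{2k}(x)$ by back-substitution along the two chains, expressing each $x_i$ ($0\le i\le k-4$ and $k+1\le i\le 2k-1$) recursively in terms of the $y_j$ and the four middle unknowns $x_{k-3},x_{k-2},x_{k-1},x_k$, and then resolves those four by substitution, using the auxiliary products $f,g,e,h$ (of degrees $\tfrac{k}{2}-1,\tfrac{k}{2},\tfrac{k}{2}-1,\tfrac{k}{2}$) and the key identities $f\cdot e=0$ and $g\cdot h=h$ to keep the expressions closed and to track exact degrees, showing each coordinate expression has degree exactly $k$. If you want to salvage your route, you would need either an explicit triangular/nilpotent structure for $\phi$ that legitimizes the substitution expansion, or simply to carry out the explicit inversion --- at which point you are reproducing the paper's computation rather than shortcutting it via EA equivalence (which, note, does not in general transport degree information to inverses precisely because of the additive affine part $L$).
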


\begin{table}[htbp]
	\centering
	\caption{The algebraic degree and the differential spectra of $\chi_n$, $\chi_{n,m}$, and $\rcchi_n$}\label{tb_ds}
	\begin{tabular}{l l l l l}
		\toprule
		$n$ & Function & Degree & $\Delta_F$ & \ Differential spectrum \\
		\midrule 
		%\noalign{\vskip 1pt}
		5 & $\chi_{5,3}$ & 3 & 14 & $\{0^{721},2^{126},4^{90},6^{45},8^5,14^5\}$ \\
		5 & $\chi_5$ & 2 & 8 & $\{0^{676},2^{176},4^{120},8^{20}\}$  \\
		\midrule
		6 & $\chi_{6,4}$ & 4 & 38 & $\{0^{3441},2^{168},4^{144},6^{120},8^{96},22^{21},24^{15},26^{12},30^9,38^6\}$  \\
		6 & $\chi_3 \parallel \chi_3$ & 2 & 16 & $\{0^{3192},4^{784},16^{56}\}$  \\
		\midrule
		8 & $\chi_{8,3}$ & 3 & 112 & $\{0^{56681},2^{1896},4^{2045},6^{980},8^{1544},10^{352},12^{720},14^{176},$ \\
		& & & & $ \  
		16^{360},18^{106},20^{112},22^{40},24^{64},26^8,28^{48},30^{16}, $ \\
		& & & & $ \ 32^{24},34^8,36^{24},38^8,40^{16},44^{12},48^{16},60^8,112^8\}$ \\
		8 & $\cchi_8$ & 2 & 64 & $\{0^{56088},4^{4928},8^{3360},16^{736},32^{120},64^{48}\}$   \\
		\bottomrule
	\end{tabular}
\end{table}

 \begin{table}[htbp]
	\centering
	\caption{The nonlinearities and Walsh spectra of $\chi_n$, $\chi_{n,m}$, and $\rcchi_n$}\label{tb_ws}
	\begin{tabular}{l l l l}
		\toprule
		$n$ & Function & $\text{NL}_F$ & \ Walsh spectrum \\
		\midrule
		5 & $\chi_{5,3}$ & 4 & $\{0^{657},-8^{101},8^{230},-16^{20},16^{10},24^5,32\}$ \\
		5 & $\chi_5$ &  8 & $\{0^{647},-8^{126},8^{210},-16^{10},16^{30},32\}$  \\
		\midrule
		6 & $\chi_{6,4}$ & 4 & $\{0^{2400},-8^{480},8^{936},-16^{144},16^{49},-24^{24},24^{6},32^{15},40^{20},48^{15},64\}$  \\
		6 & $\chi_3 \parallel \chi_3$ & 16 & $\{0^{3255},-16^{490},16^{294},-32^{14},32^{42},64\}$  \\
		\midrule
		8 & $\chi_{8,3}$ & 32 & $\{0^{42040},-16^{7332},16^{10464},-32^{1712},32^{1650},-48^{520},$ \\
		& & & $\ 48^{968},-64^{257},64^{300},-80^{96},80^{40},-96^{48},96^{24},$ \\
		& & & $\ -112^{32},-128^{16},128^{16},144^{12},192^8,256\}$ \\
		8 & $\cchi_8$ & 64 & $\{0^{54603},-32^{4116},32^{5292},-64^{546},64^{910},-128^{17},128^{51},256\}$   \\
		\bottomrule
	\end{tabular}
\end{table}

Let $n=2k$ with even $k>0$. Due to the above result, $\deg(\cchi_{n}^{-1}) = k$ while {$\deg(\chi_{n,m}^{-1})=(m-1)\left \lfloor \frac{n}{m} \right \rfloor+1 > k$} by Corollary \ref{cor-chinm}. Meanwhile, $\deg(\chi_{n,m}) > \deg(\cchi_{n})=2$ for $n > m \geq 3$ with $m \nmid n$. 
Notably, $\chi_{6,4}$ and $\chi_{8,3}$ exhibit stronger resistance against boomerang ( see Table \ref{tb_bs}) and differential-linear ( see Table \ref{tb_dlct}) attacks compared to their counterparts $\chi_3 \parallel \chi_3$ and $\cchi_8$, respectively. Moreover, although $\chi_{5,3}$, $\chi_{6,4}$, and $\chi_{8,3}$ have the same differential-linear uniformity with their counterparts, our newly constructed functions exhibit significantly reduced peak frequencies in their DLCTs. Therefore, from the perspective of differential-linear cryptanalysis, the proposed mappings are slightly superior to their counterparts. 

However, our presented mappings exhibit worse performance both in differential uniformity ( see Table \ref{tb_ds}) and nonlinearity ( see Table \ref{tb_ws}) for $n=5,6,8$ compared to their counterparts.

\subsection{Implementation cost}

Among our generalized $\chi_n$-type mappings, $\chi_{n,3}$ emerges as one of the most lightweight candidates, without considering the original function. To optimize hardware implementation, we propose a modified variant $\chi'_{n,3} $ defined by $\chi'_{n,3}(x)=\chi_{n,3}(x+\1)$, which can be equivalently expressed as:
\begin{equation*}
	y_i = x_i +x_{i+1}x_{i+2}(x_{i+3}+1)  
\end{equation*}
where $i \in [n]$ and the indices are computed modulo $n$. Crucially, $\chi'_{n,3}$ preserves all essential cryptographic properties of $\chi_{n,3}$, including differential uniformity, nonlinearity, boomerang uniformity, and differential-linear uniformity, as verified through formal analysis.

We conduct a comparative study of implementation costs among  $\chi'_{n,3}$, $\chi_n$ (i.e.,  $\chi_{n,2}$) and $\cchi_n$ for both odd and even $n$. Our evaluation using standard Complementary Metal Oxide Semiconductor (CMOS) technology libraries reveals comparable implementation costs across these mappings (see Appendix: Table~\ref{tb_gate}). Figure~\ref{fig:chi} and Figure~\ref{fig:chi53} illustrate circuit implementations for $\chi_{5}$ and $\chi'_{5,3}$, respectively, revealing their structural similarities.

\begin{figure}[htbp]
	\centering
	% 第一个电路图
	\begin{minipage}[b]{0.48\textwidth}
		\centering
		\begin{circuitikz}[scale=0.8]
			\ctikzset{logic ports=ieee,
				logic ports/scale=0.4,        % 整体缩放
				%  logic ports width=1.2,        % 宽度
				%  logic ports height=1.5,       % 高度
				%  logic ports input sep=0.3,    % 输入端口间距
			}
			
			% 定义输入节点
			\node[left] (x0) at (0,0) {$x_0$};
			\node[left] (x1) at (0,-1) {$x_1$};
			\node[left] (x2) at (0,-2) {$x_2$};
			\node[left] (x3) at (0,-3) {$x_3$};
			\node[left] (x4) at (0,-4) {$x_4$};
			
			\node[left] (y0) at (7,-0.16) {$y_0$};
			\node[left] (y1) at (7,-1.16) {$y_1$};
			\node[left] (y2) at (7,-2.16) {$y_2$};
			\node[left] (y3) at (7,-3.16) {$y_3$};
			\node[left] (y4) at (7,-4.16) {$y_4$};
			% AND 门（x1 AND x2）
			\node[and  port] (nand3-1) at (4.5,-0.5) {};
			\node[and  port] (nand3-2) at (4.5,-1.5) {};
			\node[and  port] (nand3-3) at (4.5,-2.5) {};
			\node[and  port] (nand3-4) at (4.5,-3.5) {};
			\node[and  port] (nand3-5) at (4.5,-4.5) {};
			
			\node[not port, scale=0.5] (not-1) at (3.75,-0.64) {};
			\node[not port, scale=0.5] (not-2) at (3.75,-1.64) {};
			\node[not port, scale=0.5] (not-3) at (3.75,-2.64) {};
			\node[not port, scale=0.5] (not-4) at (3.75,-3.64) {};
			\node[not port, scale=0.5] (not-5) at (3.75,-4.64) {};		
			
			% XOR 门（x0 XOR (x1 AND x2)）
			\node[xor port] (xor-1) at (5.8,-0.14) {};
			\node[xor port] (xor-2) at (5.8,-1.14) {};
			\node[xor port] (xor-3) at (5.8,-2.14) {};
			\node[xor port] (xor-4) at (5.8,-3.14) {};
			\node[xor port] (xor-5) at (5.8,-4.14) {};
			
			%\draw[help lines] (0,0) grid (5,-5);
			
			% 连接线路
			\draw (x2) --++(0.9,0)  |- (nand3-1.in 1);
			\draw (x1) --++(1.2,0) |- (not-1);
			
			\draw (x3) --++(1.5,0) |- (nand3-2.in 1);
			\draw (x2) --++(1.8,0) |- (not-2);
			
			\draw (x4) --++(2.1,0) |- (nand3-3.in 1);
			\draw (x3) --++(2.4,0) |- (not-3);
			
			\draw (x0) --++(2.7,0) |- (nand3-4.in 1);
			\draw (x4) --++(3,0) |- (not-4);
			
			\draw (x1) --++(3.3,0) |- (nand3-5.in 1);
			\draw (x0) --++(3.6,0) |- (not-5);
			
			\draw (x0) -- ++(3.75,0) |- (xor-1.in 1);
			\draw (x1) -- ++(4,0) |- (xor-2.in 1);    
			\draw (x2) -- ++(4.25,0) |- (xor-3.in 1); 
			\draw (x3) -- ++(4.5,0) |- (xor-4.in 1);  
			\draw (x4) -- ++(4.5,0) |- (xor-5.in 1); 
			
			\draw (nand3-1.out) -- ++(0.1,0) |- (xor-1.in 2); 
			\draw (nand3-2.out) -- ++(0.1,0) |- (xor-2.in 2); 
			\draw (nand3-3.out) -- ++(0.1,0) |- (xor-3.in 2); 
			\draw (nand3-4.out) -- ++(0.1,0) |- (xor-4.in 2); 
			\draw (nand3-5.out) -- ++(0.1,0) |- (xor-5.in 2);
		\end{circuitikz}
		\caption{A circuit implementation of $\chi_{5}$}\label{fig:chi}
	\end{minipage}
	\hfill % 填充水平间距
	% 第二个电路图
\begin{minipage}[b]{0.48\textwidth}
		\centering
		\begin{circuitikz}[scale=0.8]
			\ctikzset{logic ports=ieee,
				logic ports/scale=0.4,        % 整体缩放
				%  logic ports width=1.2,        % 宽度
				%  logic ports height=1.5,       % 高度
				%  logic ports input sep=0.3,    % 输入端口间距
			}
			
			% 定义输入节点
			\node[left] (x0) at (0,0) {$x_0$};
			\node[left] (x1) at (0,-1) {$x_1$};
			\node[left] (x2) at (0,-2) {$x_2$};
			\node[left] (x3) at (0,-3) {$x_3$};
			\node[left] (x4) at (0,-4) {$x_4$};
			
			\node[left] (y0) at (7.5,-0.16) {$y_0$};
			\node[left] (y1) at (7.5,-1.16) {$y_1$};
			\node[left] (y2) at (7.5,-2.16) {$y_2$};
			\node[left] (y3) at (7.5,-3.16) {$y_3$};
			\node[left] (y4) at (7.5,-4.16) {$y_4$};
			% AND 门（x1 AND x2）
			\node[nand  port, number inputs=3] (nand3-1) at (5,-0.5) {};
			\node[nand  port, number inputs=3] (nand3-2) at (5,-1.5) {};
			\node[nand  port, number inputs=3] (nand3-3) at (5,-2.5) {};
			\node[nand  port, number inputs=3] (nand3-4) at (5,-3.5) {};
			\node[nand  port, number inputs=3] (nand3-5) at (5,-4.5) {};
			
			\node[not port, scale=0.5] (not-1) at (4.25,-0.686) {};
			\node[not port, scale=0.5] (not-2) at (4.25,-1.686) {};
			\node[not port, scale=0.5] (not-3) at (4.25,-2.686) {};
			\node[not port, scale=0.5] (not-4) at (4.25,-3.686) {};
			\node[not port, scale=0.5] (not-5) at (4.25,-4.686) {};		
			
			% XOR 门（x0 XOR (x1 AND x2)）
			\node[xor port] (xor-1) at (6.3,-0.14) {};
			\node[xor port] (xor-2) at (6.3,-1.14) {};
			\node[xor port] (xor-3) at (6.3,-2.14) {};
			\node[xor port] (xor-4) at (6.3,-3.14) {};
			\node[xor port] (xor-5) at (6.3,-4.14) {};
			
			%\draw[help lines] (0,0) grid (5,-5);
			
			% 连接线路
			\draw (x1) --++(0.75,0)  |- (nand3-1.in 1);
			\draw (x2) --++(1,0) |- (nand3-1.in 2);
			\draw (x3) --++(1.25,0) |- (not-1);
			
			\draw (x2) --++(1.5,0) |- (nand3-2.in 1);
			\draw (x3) --++(1.75,0) |- (nand3-2.in 2);
			\draw (x4) --++(2,0) |- (not-2);
			
			\draw (x3) --++(2.25,0) |- (nand3-3.in 1);
			\draw (x4) --++(2.5,0) |- (nand3-3.in 2);
			\draw (x0) --++(2.75,0) |- (not-3);
			
			\draw (x4) --++(3,0) |- (nand3-4.in 1);
			\draw (x0) --++(3.25,0) |- (nand3-4.in 2);
			\draw (x1) --++(3.5,0) |- (not-4);
			
			\draw (x0) --++(3.75,0) |- (nand3-5.in 1);
			\draw (x1) --++(4,0) |- (nand3-5.in 2);
			\draw (x2) --++(4.25,0) |- (not-5);
			
			\draw (x0) -- ++(3.75,0) |- (xor-1.in 1);
			\draw (x1) -- ++(4,0) |- (xor-2.in 1);    
			\draw (x2) -- ++(4.25,0) |- (xor-3.in 1); 
			\draw (x3) -- ++(4.5,0) |- (xor-4.in 1);  
			\draw (x4) -- ++(4.5,0) |- (xor-5.in 1); 
			
			\draw (nand3-1.out) -- ++(0.1,0) |- (xor-1.in 2); 
			\draw (nand3-2.out) -- ++(0.1,0) |- (xor-2.in 2); 
			\draw (nand3-3.out) -- ++(0.1,0) |- (xor-3.in 2); 
			\draw (nand3-4.out) -- ++(0.1,0) |- (xor-4.in 2); 
			\draw (nand3-5.out) -- ++(0.1,0) |- (xor-5.in 2);
		\end{circuitikz}
		\caption{A circuit implementation of $\chi'_{5,3}$}\label{fig:chi53}
	\end{minipage}
\end{figure}

The component function $y_i=x_i+x_{i+1}x_{i+2}(x_{i+3}+1)$ in $\chi'_{n,3}$ requires one XOR-gate, one NOT-gate, and one $3$-input NAND-gate per bit. In contrast, $\chi_{n,2}$ replaces the $3$-input NAND-gate with a $2$-input AND-gate. As demonstrated in Table~\ref{tb_ge}, this substitution results in similar implementation areas for both mappings. Moreover, $\chi'_{n,3}$ achieves an implementation area no larger than $\cchi_n$ by eliminating one NOT-gate.

Latency complexity represents another critical metric for lightweight implementations. Following the framework in~\cite[Sec.3]{Rsa2022}, which defines latency as the longest path through NAND/NOR-gates using the gate set $\mathcal{G} = \{\text{NOT}, \text{NAND}, \text{NOR}\}$, we analyze critical paths:
\begin{itemize}
\item $\chi'_{n,3}$ exhibits a 4-stage latency (2 stages for 3-input NAND + 2 stages for XOR);
\item Both $\chi_n$ and $\cchi_n $ show 3-stage latency (1 stage for AND + 2 stages for XOR).
\end{itemize}
This analysis suggests that while $\chi'_{n,3}$ maintains area efficiency, it incurs a modest latency tradeoff compared to $\chi_n$-type mappings. The choice between these implementations may therefore depend on specific application constraints emphasizing either area or latency optimization.

%In terms of implementation area, compare with $\chi'_{n,3}$, $\cchi_n$ needs an additional NOT gate and replace the 3-bit NAND gate with AND gate. 
%Thus, $\chi'_{n,3}$ achieves an implementation area no larger than $\cchi_n$. 
%Regarding latency, according to  \cite[Sec.6]{Rsa2022}, both the XOR gate and 3-bit NAND gate have a latency complexity of 2. Hence $\chi'_{n,3}$ has a latency complexity of 4, whereas $\chi_n$ and $\cchi_n$ have a latency complexity of 3. 

\section{Conclusion and future work} \label{sec:conclusion}

In this work, we generalized the well-known $\chi_n$ permutation to an extensive family of permutations $\chi_{n,m}$, which exists both for odd and even dimension vector spaces over $\F_2$. 
We not only provide a clear characterization of the algebraic structure of $\chi_{n,m}$ but also conduct a comprehensive comparative analysis between our construction and their counterparts for small-scale $n$, including implementation costs and security metrics. Our research establishes a solid foundation for subsequent investigations in both theoretical and applied domains. Two clear paths to continue are suggested: $\chi_n$, $\chi_{n,3}$ and $\cchi_n$ show advantages for diverse security metrics, and it would be of interest to investigate their concatenations to trade-off these metrics, {\it e.g.}, $\chi_3 \parallel \chi_{5,3} \parallel \cchi_{8}$.
The components of $\cchi_{n}$ and $\chi_n$ are similar and have the same algebraic degree, and so another open security path is to study the algebraic structure of $\cchi_n$ and their iterates, extending our methods..

%\bibliographystyle{plain}
%\bibliography{bibtex0428}

\newpage
\appendix

\section{proof of Theorem \ref{th-cchi}} \label{appendix}

\begin{proof}
	Based on (\ref{eq-cchi}), we first attempt to express $x_i$ in terms of $x_{k-3},x_{k-2},x_{k-1},x_k$, and $y_0,y_1,\ldots,y_{2k-1}$.
	Since $y_i = x_i + \overline{x}_{i+1} x_{i+2}$ for $i < k-3$ or $k < i < n-2$, $x_{k-4}$ can be expressed as:
	\begin{equation*}
		x_{k-4} = y_{k-4}+\bar{x}_{k-3}x_{k-2}.
	\end{equation*}
	From $x_{k-4}$ we can further derive an alternative expression for $x_{k-5}$:
	\begin{equation*}
		x_{k-5} = y_{k-5}+\bar{x}_{k-4}x_{k-3}
		= y_{k-5}+\bar{y}_{k-4}x_{k-3}.
	\end{equation*}
	Therefore, we can recursively express $x_i$ as follows for $0 \le i \le k-4$:
	\begin{equation} \label{eq-0lem-4}
		x_i=\left\{\begin{aligned}
			y_{i}+\bar{y}_{i+1}\Big(y_{i+2}+\bar{y}_{i+3}\big(\cdots\bar{y}_{k-5}\left(y_{k-4}+\bar{x}_{k-3}x_{k-2}\right)\cdots\big)\Big), & \quad {\rm if} \ i \ {\rm is \ even},\\ y_{i}+\bar{y}_{i+1}\Big(y_{i+2}+\bar{y}_{i+3}\big(\cdots\bar{y}_{k-6}(y_{k-5}+\bar{y}_{k-4}x_{k-3})\cdots\big)\Big), & \quad {\rm if} \ i \ {\rm is \ odd}.
		\end{aligned} \right.
	\end{equation}
	Note that the first two terms are
	\begin{equation} \label{eq-x0x1}
		\left\{\begin{aligned}
			x_1&=y_1+\bar{y}_2\Big(y_3+\bar{y}_4\big(\cdots \bar{y}_{k-6}(y_{k-5}+\bar{y}_{k-4}x_{k-3})\cdots\big)\Big),\\
			x_0&=y_0+\bar{y}_1\Big(y_2+\bar{y}_3\big(\cdots \bar{y}_{k-5}(y_{k-4}+\bar{x}_{k-3}x_{k-2})\cdots\big)\Big).\\
		\end{aligned} \right.
	\end{equation}

Similarly, for $k+1 \le i \le 2k-1$, $x_i$ can be recursively expressed as follows:
\begin{equation} \label{eq-m+1le2m-1}
	\left\{\begin{aligned} 
		x_{2k-1} &=y_{2k-1}+\bar{x}_{k-1}x_k,\\
		x_{2k-2}&=y_{2k-2}+\bar{x}_{2k-1}x_{k-1} =y_{2k-2}+\bar{y}_{2k-1}x_{k-1},\\
		\vdots &\\
		x_{k+2t+1}&=y_{k+2t+1}+\bar{y}_{k+2t+2}\big(\cdots \bar{y}_{2k-2}(y_{2k-1}+\bar{x}_{k-1}x_k)\cdots\big), \\
		x_{k+2t}&=y_{k+2t}+\bar{y}_{k+2t+1}\big(\cdots \bar{y}_{2k-3}(y_{2k-2}+\bar{y}_{2k-1}x_{k-1})\cdots\big),\\
		\vdots &\\
		x_{k+2}&=y_{k+2}+\bar{y}_{k+3}\big(\cdots \bar{y}_{2k-3}(y_{2k-2}+\bar{y}_{2k-1}x_{k-1})\cdots\big),\\
		x_{k+1}&=y_{k+1}+\bar{y}_{k+2}\big(\cdots \bar{y}_{2k-2}(y_{2k-1}+\bar{x}_{k-1}x_k)\cdots\big).
	\end{aligned}  \right.
\end{equation}
The remaining four terms could not be simply represented recursively and are as follows:
\[
\left\{\begin{aligned}
	x_{k-3}&=\bar{y}_{k-1}+\bar{x}_k\bar{x}_{k+1}, \\
	x_{k-2}&=y_{k}+\bar{x}_{k+1}x_{k+2}=y_{k}+\bar{y}_{k+1}x_{k+2}, \\
	x_{k-1}&=y_{k-2}+\bar{x}_0x_1=y_{k-2}+\bar{y}_0x_1, \\
	x_k &=y_{k-3}+\bar{x}_{k-2}x_0.
\end{aligned} \right.
\]
By substituting the expressions of $x_{k+1}$, $x_{k+2}$ in (\ref{eq-m+1le2m-1}) and $x_{1}$, $x_{0}$ in (\ref{eq-x0x1}) into the expressions for $x_{k-3}$, $x_{k-2}$, $x_{k-1}$, and $x_{k}$ respectively, we obtain:
\begin{equation} \label{eq-3210}
	\left\{\begin{aligned}
		x_{k-3}%&=\bar{y}_{m-1}+\bar{x}_m\bar{x}_{m+1}\\
		&=\bar{y}_{k-1}+\bar{x}_k\Big(\bar{y}_{k+1}+\bar{y}_{k+2}\big(\cdots \bar{y}_{2k-2}(y_{2k-1}+\bar{x}_{k-1}x_k)\cdots\big)\Big)\\
		&=\bar{y}_{k-1}+\bar{x}_k\Big(\bar{y}_{k+1}+\bar{y}_{k+2}\big(\cdots(y_{2k-3}+\bar{y}_{2k-2}y_{2k-1})\cdots\big)\Big),\\
		x_{k-2}	&=y_{k}+\bar{y}_{k+1}\Big(y_{k+2}+\bar{y}_{k+3}\big(\cdots \bar{y}_{2k-3}(y_{2k-2}+\bar{y}_{2k-1}x_{k-1})\cdots\big)\Big)\\
		&=y_{k}+\bar{y}_{k+1}\bar{y}_{k+3}\cdots \bar{y}_{2k-1} x_{k-1} \\
		&\quad+\bar{y}_{k+1} \Big(y_{k+2}+\bar{y}_{k+3}\big(\cdots (y_{2k-4}+\bar{y}_{2k-3}y_{2k-2})\cdots\big)\Big),\\
		x_{k-1}
		&=y_{k-2}+\bar{y}_0\Big(y_1+\bar{y}_2\big(\cdots \bar{y}_{k-6}(y_{k-5}+\bar{y}_{k-4}x_{k-3})\cdots\big)\Big)\\
		&=y_{k-2}+\bar{y}_0\bar{y}_2\cdots\bar{y}_{k-4} x_{k-3}+\bar{y}_0\Big(y_1+\bar{y}_2\big(\cdots (y_{k-7}+\bar{y}_{k-6}y_{k-5})\cdots\big)\Big),\\
		x_k %&=y_{m-3}+\bar{x}_{m-2}x_0\\
		&=y_{k-3}+\bar{x}_{k-2}\Big(y_0+\bar{y}_1\big(\cdots \bar{y}_{k-5}(y_{k-4}+\bar{x}_{k-3}x_{k-2})\cdots\big)\Big)\\
		&=y_{k-3}+\bar{x}_{k-2} \Big(y_0+\bar{y}_1\big(\cdots (y_{k-6}+\bar{y}_{k-5}y_{k-4})\cdots\big)\Big).
	\end{aligned} \right.
\end{equation}

Furthermore, by substituting the expressions of $x_{k}$ and $x_{k-1}$ into the expressions for $x_{k-3}$ and $x_{k-2}$, respectively, we have
\begin{equation} \label{eq-xm-3m-2}
	\left\{\begin{aligned}
		x_{k-3}&=\bar{y}_{k-1}+\bigg(\bar{y}_{k-3}+\bar{x}_{k-2}\Big(y_0+\bar{y}_1\big(\cdots (y_{k-6}+\bar{y}_{k-5}y_{k-4})\cdots\big)\Big)\bigg) \\
		&\quad \cdot\Big(\bar{y}_{k+1}+\bar{y}_{k+2}\big(\cdots(y_{2k-3}+\bar{y}_{2k-2}y_{2k-1})\cdots\big)\Big)\\
		&=\bar{y}_{k-1}+\bar{x}_{k-2}\cdot f\cdot g+\bar{y}_{k-3} \cdot g ,\\
		x_{k-2}&=y_{k}+\bar{y}_{k+1}\bar{y}_{k+3}\cdots \bar{y}_{2k-1} \bigg(y_{k-2}+\bar{y}_0\bar{y}_2\cdots\bar{y}_{k-4} x_{k-3} \\
		&\quad +\bar{y}_0\Big(y_1+\bar{y}_2\big(\cdots (y_{k-7}+\bar{y}_{k-6}y_{k-5})\cdots\big)\Big)\bigg)  \\
		&\quad +\bar{y}_{k+1}\Big(y_{k+2}+\bar{y}_{k+3}\big(\cdots (y_{2k-4}+\bar{y}_{2k-3}y_{2k-2})\cdots\big)\Big) \\
		&=y_k+h\cdot \bigg(y_{k-2}+\bar{y}_0\Big(y_1+\bar{y}_2\big(\cdots (y_{k-7}+\bar{y}_{k-6}y_{k-5})\cdots\big)\Big)\bigg)\\
		&\quad+x_{k-3}\cdot e\cdot h+\bar{y}_{k+1} \Big(y_{k+2}+\bar{y}_{k+3}\big(\cdots (y_{2k-4}+\bar{y}_{2k-3}y_{2k-2})\cdots\big)\Big),
	\end{aligned} \right.
\end{equation}
where
\[
\left\{\begin{aligned}
	f&=y_0+\bar{y}_1\Big(y_2+\bar{y}_3\big(\cdots (y_{k-6}+\bar{y}_{k-5}y_{k-4})\cdots\big)\Big),\\
	g&=\bar{y}_{k+1}+\bar{y}_{k+2}\Big(y_{k+3}+\bar{y}_{k+4}\big(\cdots(y_{2k-3}+\bar{y}_{2k-2}y_{2k-1})\cdots\big)\Big),\\
	e&=\bar{y}_0\bar{y}_2\cdots\bar{y}_{k-4},\\
	h&=\bar{y}_{k+1}\bar{y}_{k+3}\cdots \bar{y}_{2k-1}.
\end{aligned} \right.
\]
The degrees of $f$, $g$, $e$, and $h$ are $\frac{k}{2}-1$, $\frac{k}{2}$, $\frac{k}{2}-1$, and $\frac{k}{2}$ respectively. 
Some interesting properties include $g \cdot h = h$ and $f \cdot e = 0$.

Substituting $x_{k-2}$ into the expression of $x_{k-3}$, or $x_{k-3}$ into the expression of $x_{k-2}$ would lead to a multiplication of $f\cdot g\cdot e\cdot h$ which equals $0$ since $f\cdot e = 0$, we know that the expressions of $x_{k-3}$ 
and $x_{k-2}$ involve only $y_i$. 

The second term of $x_{k-3}$ in (\ref{eq-xm-3m-2})
\begin{equation} \label{eq-xm-2fg}
	\begin{aligned}
		&\bar{x}_{k-2}\cdot f\cdot g \\
		&=\bar{y}_k\cdot f\cdot g + h\cdot\bigg(y_{k-2}+\bar{y}_0\Big(y_1+\bar{y}_2\big(\cdots (y_{k-7}+\bar{y}_{k-6}y_{k-5})\cdots\big)\Big)\bigg)\cdot f\cdot g \\
		&\quad+\bar{y}_{k+1} \Big(y_{k+2}+\bar{y}_{k+3}\big(\cdots (y_{2k-4}+\bar{y}_{2k-3}y_{2k-2})\cdots\big)\Big)\cdot f\cdot g
	\end{aligned}
\end{equation}
has degree $k$, since the second term of $\bar{x}_{k-2}\cdot f\cdot g$ in (\ref{eq-xm-2fg})
\[
\begin{aligned}
	&h\cdot\bigg(y_{k-2}+\bar{y}_0 \Big(y_1+\bar{y}_2\big(\cdots (y_{k-7}+\bar{y}_{k-6}y_{k-5})\cdots\big)\Big)\bigg)\cdot f\cdot g\\
	&=h\cdot g \cdot \Bigg(\bigg(y_{k-2}+\bar{y}_0 \Big(y_1+\bar{y}_2\big(\cdots (y_{k-7}+\bar{y}_{k-6}y_{k-5})\cdots\big)\Big)\bigg)\cdot f\Bigg)\\
	&=\bar{y}_{k+1}\bar{y}_{k+3}\cdots \bar{y}_{2k-1}y_{k-2}\cdot f
\end{aligned}
\]
has degree $k$, and the last term of $\bar{x}_{k-2}\cdot f\cdot g$ in (\ref{eq-xm-2fg})
\[
\begin{aligned}
	&\bar{y}_{k+1} \Big(y_{k+2}+\bar{y}_{k+3}\big(\cdots (y_{2k-4}+\bar{y}_{2k-3}y_{2k-2})\cdots\big)\Big)\cdot f\cdot g\\
	&=\bar{y}_{k+1}  \Big(y_{k+2}+\bar{y}_{k+3}\big(\cdots (y_{2k-4}+\bar{y}_{2k-3}y_{2k-2})\cdots\big)\Big)\cdot f
\end{aligned}
\]
has degree of $k-1$. Combining this with the fact that the last term of $x_{k-3}$ in (\ref{eq-xm-3m-2}) is $\bar{y}_{k-3} \cdot g$, which has degree of $\frac{k}{2}+1$. So the algebraic degree $x_{k-3}$ is $k$. Similarly, we can obtain that the degree of $x_{k-2}$ is $k$.

Rewriting the expression for $x_{k-3}$ yields
\begin{equation} \label{eq-xm-3}
	\begin{aligned}
		x_{k-3} &= \bar{y}_{k-1}+\bar{y}_{k-3}\Big(\bar{y}_{k+1}+\bar{y}_{k+2}\big(\cdots(y_{2k-3}+\bar{y}_{2k-2}y_{2k-1})\cdots\big)\Big) \\
		&\quad +f\cdot y_{k-2}\bar{y}_{k+1}\bar{y}_{k+3}\cdots\bar{y}_{2k-1}+ f\cdot \bar{y}_{k+1} \Big(y_{k+2} \\
		&\quad +\bar{y}_{k+3}\big(\cdots (y_{2k-4}+\bar{y}_{2k-3}y_{2k-2})\cdots\big)\Big)+f\cdot \bar{y}_{k}\cdot g \\
		&=  \bar{y}_{k-1}+\bar{y}_{k-3}\cdot g+f\cdot h\cdot y_{k-2}+f\cdot F+f\cdot \bar{y}_{k}\cdot g,\\
	\end{aligned}
\end{equation}
where 
\[
F=\bar{y}_{k+1} \Big(y_{k+2}+\bar{y}_{k+3}\big(\cdots (y_{2k-4}+\bar{y}_{2k-3}y_{2k-2})\cdots\big)\Big),
\]
has degree $\frac{k}{2}$ and $g\cdot F = F$.

Back to $x_0,\ldots, x_{k-4}$, we need to prove that $\bar{y}_{2i+1}\bar{y}_{2i+3}\cdots\bar{y}_{k-5}\bar{x}_{k-3}x_{k-2}$ has degree of $k$, where $i\in \{0, 1, \cdots, \frac{k}{2}-2 \}$, and $\bar{y}_{2j}\bar{y}_{2j+2}\cdots\bar{y}_{k-4} x_{k-3}$ has degree of $k$, where $j\in \{1,\ldots, \frac{k}{2}-2\}$.
Note that
\begin{equation} \label{eq-xm-3xm-2}
	\bar{x}_{k-3}x_{k-2} =\bar{x}_{k-3} (y_k+x_{k-3}\cdot e\cdot h+h\cdot G+F),
\end{equation}
where
\[
\begin{aligned}
	G=y_{k-2}+\bar{y}_0\Big(y_1+\bar{y}_2\big(\cdots (y_{k-7}+\bar{y}_{k-6}y_{k-5})\cdots\big)\Big), 
\end{aligned}
\] 
has degree $\frac{k}{2}-1$. Then, substituting $x_{k-3}$ in (\ref{eq-xm-3}) into (\ref{eq-xm-3xm-2}),  we have
\[
\begin{aligned}
	&\bar{x}_{k-3}x_{k-2} =(y_{k-1}+\bar{y}_{k-3}\cdot g+f\cdot h\cdot y_{k-2}+f\cdot F+f\cdot \bar{y}_{k}\cdot g)  (y_k+h\cdot G+F)\\
	&=y_{k-1}y_k+\bar{y}_{k-3}y_k\cdot g+y_{k-2}y_k\cdot f\cdot h+y_k\cdot f\cdot F+y_{k-1}\cdot h\cdot G+\bar{y}_{k-3}\cdot h\cdot G \\
	&\quad+ y_{k-2}\cdot f\cdot h +y_{k-2}\bar{y}_k\cdot f\cdot h+y_{k-1}\cdot F+\bar{y}_{k-3}\cdot F +f\cdot F+\bar{y}_{k}\cdot f\cdot F\\
	&=y_{k-1}y_k+\bar{y}_{k-3}y_k\cdot g+(y_{k-1}+\bar{y}_{k-3})\cdot F+(y_{k-1}+\bar{y}_{k-3})\cdot h\cdot G.
\end{aligned}
\]
Next, we consider the last term of $\bar{x}_{k-3}x_{k-2}$. Note that 
\[
\begin{aligned}
	&\bar{y}_{2t+1} \bar{y}_{2t+3} \cdots \bar{y}_{k-5}\cdot G \\
	&= \bar{y}_{2t+1} \bar{y}_{2t+3} \cdots \bar{y}_{k-5} \bigg(y_{k-2}+\bar{y}_0 \Big(y_1+\bar{y}_2\big(\cdots(y_{2t-3}+\bar{y}_{2t-2}y_{2t-1})\cdots \big)\Big)\bigg)
\end{aligned}
\]
always has degree $\frac{k}{2}-1$ for $t \in [\frac{k}{2}-2]$. Therefore, combining with the fact that the degree of $h$ is $\frac{k}{2}$, we know that the degrees of $x_0, x_2,\ldots, x_{k-4}$ are all $k$ by (\ref{eq-0lem-4}). Similarly, 
\[\bar{y}_{2t} \bar{y}_{2t+2} \bar{y}_{k-4} \cdot f 
=\bar{y}_{2t} \bar{y}_{2t+2}\cdots \bar{y}_{k-4} \Big(y_0+\bar{y}_1\big(\cdots(y_{2t-4}+\bar{y}_{2t-3}y_{2t-2})\cdots\big)\Big)
\]
always has degree $\frac{k}{2}-1$ for $t\in \{1,2, \cdots, \frac{k}{2}-2\}$.
Thus, the degrees of $x_0, x_1,\ldots, x_{k-4}$ are all $k$.

To investigate the algebraic degrees of $x_{k+1}, x_{k+2},\ldots, x_{2k-1}$, we need to consider $x_{k-1}$ and $x_k$. By (\ref{eq-3210}) and (\ref{eq-xm-3}),
\[
\begin{aligned}
	x_{k-1} &= y_{k-2}+(\bar{y}_{k-1}+\bar{y}_{k-3}\cdot g+f\cdot h\cdot y_{k-2}+f\cdot F+f\cdot \bar{y}_{k}\cdot g)\cdot e+(G+y_{k-2}) \\
	&=G+\bar{y}_{k-1}\cdot e+\bar{y}_{k-3}\cdot g\cdot e+(h\cdot y_{k-2}+F+\bar{y}_k\cdot g)\cdot f\cdot e\\
	&=G+\bar{y}_{k-1}\cdot e+\bar{y}_{k-3}\cdot g\cdot e
\end{aligned}
\]
has degree $k$ since the degrees of $g$ and $e$ are $\frac{k}{2}$ and $\frac{k}{2}-1$, respectively.
Then 
\[
\begin{aligned}
	&\bar{y}_{2t+1}\bar{y}_{2t+3}\cdots \bar{y}_{2k-1}\cdot g  \\
	&= \bar{y}_{2t+1}\bar{y}_{2t+3}\cdots \bar{y}_{2k-1}  \bigg(\bar{y}_{k+1}+\bar{y}_{k+2}\Big(y_{k+3} +\bar{y}_{k+4}\big(\cdots(y_{2t-3}+\bar{y}_{2t-2}y_{2t-1})\cdots\big)\Big)\bigg)
\end{aligned}
\]
always has degree $\frac{k}{2}$ for $t\in \{\frac{k}{2}+1,\frac{k}{2}+2,\ldots, k-1\}$.
This implies that $\bar{y}_{2t+1}\bar{y}_{2t+3}\cdots \bar{y}_{2k-1}$ 
$\cdot x_{k-1}$ always has degree $k$ for $t\in \{\frac{k}{2}+1,\frac{k}{2}+2,\ldots, k-1\}$.
So $x_{k+2}, x_{k+4},\ldots,x_{2k-2}$ have degree $k$ by (\ref{eq-m+1le2m-1}). Similarly, 
\begin{equation*}
	x_k = y_{k-3}+\bar{x}_{k-2}\cdot f =y_{k-3}+f\cdot (\bar{y}_k+h\cdot x_{k-1}+F),
\end{equation*}
has degree $k$ since the degrees of $f$ and $h$ are $\frac{k}{2}-1$ and $\frac{k}{2}$, respectively. Then 
\[
\begin{aligned}
	\bar{x}_{k-1}x_k&=\bar{x}_{k-1}\cdot (y_{k-3}+f\cdot \bar{y}_k+f\cdot F)\\
	&=(\bar{G}+\bar{y}_{k-1}\cdot e)y_{k-3}+\bar{y}_{k-2}\bar{y}_k\cdot f+\bar{y}_{k-2}\cdot f\cdot F,
\end{aligned}
\]
and 
\[
\begin{aligned}
	&\bar{y}_{2t}\bar{y}_{2t+2}\cdots \bar{y}_{2k-2}\cdot F\\
	&=
	\bar{y}_{2t}\bar{y}_{2t+2}\cdots \bar{y}_{2k-2} \bigg(\bar{y}_{k+1}\Big(y_{k+2}+\bar{y}_{k+3}\big(\cdots(y_{2t-4}+\bar{y}_{2t-3}y_{2t-2})\cdots\big)\Big)\bigg)
\end{aligned}
\]
always has degree $\frac{k}{2}$ for $t\in \{\frac{k}{2}+2,\frac{k}{2}+3,\ldots, k-1\}$, and is $0$ when $t=\frac{k}{2}+1$.
So $x_{k+1}, x_{k+3},\ldots, x_{2k-1}$ have degree $k$.  This completes proof.

\end{proof}

\section{Frequently-used logic gates in common CMOS technology libraries}

\vspace{-0.2cm} 

\begin{table}[h]
	\centering
	\caption{Frequently-used logic gates in common CMOS technology libraries}\label{tb_gate}
	\begin{tabular}{c c c c}
		\toprule
		Gate & Domain $\to$ Range & Logic formulas & Algebraic normal form \\
		\midrule
		NOT & $\F_2 \to \F_2$ & $x_0 \to \neg x_0$ & $x_0 \to x_0+1$  \\
		AND & $\F_2^2 \to \F_2$ & $x_0,x_1 \to x_0 \land x_1$ & $x_0,x_1 \to x_0x_1$ \\
		OR & $\F_2^2 \to \F_2$ & $x_0,x_1 \to x_0 \lor x_1$ & $x_0,x_1 \to x_0x_1+x_0+x_1$ \\
		NAND & $\F_2^2 \to \F_2$ & $x_0,x_1 \to \neg(x_0 \land x_1)$ & $x_0,x_1 \to x_0x_1+1$ \\
		NOR & $\F_2^2 \to \F_2$ & $x_0,x_1 \to \neg(x_0 \lor x_1)$ & $x_0,x_1 \to x_0x_1+x_0+x_1+1$ \\
		XOR & $\F_2^2 \to \F_2$ & $x_0,x_1 \to x_0 \oplus x_1$ & $x_0,x_1 \to x_0+x_1$ \\
		XNOR & $\F_2^2 \to \F_2$ & $x_0,x_1 \to \neg(x_0 \oplus x_1)$ & $x_0,x_1 \to x_0+x_1+1$ \\
		AND3 & $\F_2^3 \to \F_2$ & $x_0,x_1,x_2 \to x_0 \land x_1 \land x_2$ & $x_0,x_1,x_2 \to x_0x_1x_2$ \\
		NAND3 & $\F_2^3 \to \F_2$ & $x_0,x_1,x_2 \to \neg(x_0 \land x_1 \land x_2)$ & $x_0,x_1,x_2 \to x_0x_1x_2+1$ \\
		XOR3 & $\F_2^3 \to \F_2$ & $x_0,x_1,x_2 \to x_0 \oplus x_1 \oplus x_2$ & $x_0,x_1,x_2 \to x_0+x_1+x_2$ \\
		\bottomrule
	\end{tabular}
\end{table}

\vspace{-0.2cm} 

\begin{table}[h]
	\centering
	\caption{Implementation area of frequently-used logic gates in common CMOS technology libraries}\label{tb_ge}
	\begin{tabular}{c c c c c c c c c c}
		\toprule
		Gate & UMC & TSMC & TSMC & SMIC & SMIC & Nangate & Nangate & STD & STM  \\
		& 180nm & 65nm & 28nm & 130nm & 65nm & 45nm & 15nm & 350nm & 65nm \\
		\midrule
		NOT & 0.67 & 0.50 & 0.67 & 0.67 & 0.75 & 0.67 & 0.75 & 0.67 & 0.50 \\
		AND & 1.33 & 1.50 & 1.33 & 1.33 & 1.50 & 1.33 & 1.50 & 1.33 & 1.50 \\
		OR & 1.33 & 1.50 & 1.33 & 1.33 & 1.50 & 1.33 & 1.50 & 1.33 & 1.50 \\
		NAND & 1.00 & 1.00 & 1.00 & 1.00 & 1.00 & 1.00 & 1.00 & 1.00 & 1.00 \\
		NOR & 1.00 & 1.00 & 1.00 & 1.00 & 1.00 & 1.00 & 1.00 & 1.00 & 1.00 \\
		XOR & 2.67 & 2.50 & 3.00 & 2.33 & 2.25 & 2.00 & 2.25 & 2.33 & 2.00 \\
		XNOR & 2.00 & 2.50 & 3.00 & 2.33 & 2.25 & 2.00 & 2.25 & 2.33 & 2.00 \\
		AND3 & 2.33 & 2.00 & 1.67 & 1.67 & 1.75 & 1.67 & 2.00 & 1.67 & 2.00 \\
		NAND3 & 1.33 & 1.50 & 1.33 & 1.33 & 1.25 & 1.33 & 1.50 & 1.33 & 1.50 \\
		XOR3 & 4.67 & 5.50 & 4.33 & 5.67 & 4.75 & N/A & N/A & 4.00 & \\
		\bottomrule
	\end{tabular}
\end{table}

\end{document}